\newtheorem{theorem}{Theorem}[section]
\newtheorem{lemma}[theorem]{Lemma}
\newtheorem{meta-theorem}[theorem]{Meta-Theorem}
\newtheorem{claim}[theorem]{Claim}
\definecolor{darkgreen}{rgb}{0,0.5,0}
\algnewcommand\algorithmicswitch{\textbf{switch}}
\algnewcommand\algorithmiccase{\textbf{case}}
\newcommand{\eps}{\varepsilon}
\newcommand{\poly}{\operatorname{\text{{\rm poly}}}}
\renewcommand{\paragraph}[1]{\vspace{0.15cm}\noindent {\bf #1}:}
\newcommand{\FullOrShort}{full}
  \newcommand{\fullOnly}[1]{#1}
  \newcommand{\shortOnly}[1]{}
    \newcommand{\fullOnly}[1]{}
    \newcommand{\IncludePictures}[1]{}
\begin{document}

\date{}

\title{An Improved Distributed Algorithm for \\Maximal Independent Set}

\author{
 Mohsen Ghaffari\\
  \small MIT \\
  \small ghaffari@mit.edu
 }

\maketitle


\begin{abstract}
The Maximal Independent Set (MIS) problem is one of the basics in the study of \emph{locality} in distributed graph algorithms. This paper
presents an extremely simple randomized algorithm providing a near-optimal \emph{local complexity} for this problem, which incidentally, when combined with some known techniques, also leads to a near-optimal \emph{global complexity}.

\smallskip
Classical MIS algorithms of Luby [STOC'85] and Alon, Babai and Itai [JALG'86] provide the \emph{global complexity} guarantee that, with high probability\footnote{As standard, we use the phrase \emph{with high probability} to indicate that an event has probability at least $1-1/n$.}, \emph{all nodes} terminate after $O(\log n)$ rounds. In contrast, our initial focus is on the \emph{local complexity}, and our main contribution is to provide a very simple algorithm guaranteeing that \emph{each} particular node $v$ terminates after $O(\log \mathsf{deg}(v)+\log 1/\eps)$ rounds, with probability at least $1-\eps$. 
The guarantee holds even if the randomness outside $2$-hops neighborhood of $v$ is determined adversarially. 
This degree-dependency is optimal, due to a lower bound of Kuhn, Moscibroda, and Wattenhofer [PODC'04].  

\smallskip  
Interestingly, this local complexity smoothly transitions to a global complexity: by adding techniques of Barenboim, Elkin, Pettie, and Schneider [FOCS'12; arXiv: 1202.1983v3], we\footnote{\emph{quasi nanos, gigantium humeris insidentes}} get a randomized MIS algorithm with a high probability global complexity of $O(\log \Delta) + 2^{O(\sqrt{\log \log n})}$, where $\Delta$ denotes the maximum degree. This improves over the $O(\log^2 \Delta) + 2^{O(\sqrt{\log \log n})}$ result of Barenboim et al., and gets close to the $\Omega(\min\{\log \Delta, \sqrt{\log n}\})$ lower bound of Kuhn et al.

\smallskip
Corollaries include improved algorithms for MIS in graphs of upper-bounded arboricity, or lower-bounded girth, for Ruling Sets, for MIS in the Local Computation Algorithms (LCA) model, and a faster distributed algorithm for the Lov\'{a}sz Local Lemma. 

\end{abstract}

\vspace*{-7mm}
\setcounter{page}{0}
\thispagestyle{empty}

\newpage
\section{Introduction and Related Work}\label{sec:intro}
\vspace{-5pt}
Locality sits at the heart of distributed computing theory and is studied in the medium of problems such as Maximal Independent Set (MIS), Maximal Matching (MM), and Coloring. Over time, MIS has been of special interest as the others reduce to it. The story can be traced back to the surveys of Valiant\cite{valiant1983parallel} and Cook\cite{cook1983overview} in the early 80's which mentioned MIS as an interesting problem in non-centralized computation, shortly after followed by (poly-)logarithmic algorithms of Karp and Wigderson\cite{KarpWigderson}, Luby\cite{luby1985simple}, and Alon, Babai, and Itai\cite{alon1986fast}. Since then, this problem has been studied extensively. We refer the interested reader to \cite[Section 1.1]{barenboim2012locality}, which provides a thorough and up to date review of the state of the art.

In this article, we work with the standard distributed computation model called $\mathsf{LOCAL}$\cite{peleg:2000}: the network is abstracted as a graph $G=(V, E)$ where $|V|=n$; initially each node only knows its neighbors; communications occur in synchronous rounds, where in each round nodes can exchange information only with their graph neighbors. 

In the $\mathsf{LOCAL}$ model, besides it's practical application, the distributed computation time-bound has an intriguing purely graph-theoretic meaning: it identifies the radius up to which one needs to look to determine the output of each node, e.g., its color in a coloring. For instance, results of \cite{luby1985simple, alon1986fast} imply that looking only at the $O(\log n)$-hop neighborhood suffices, w.h.p.

\subsection{Local Complexity}
\label{subsec:localIntro}
\vspace{-5pt}
Despite the local nature of the problem, classically the main focus has been on the global complexity, i.e., the time till all nodes terminate. Moreover, somewhat strikingly, the majority of the standard analysis also take a non-local approach: often one considers the whole graph and shows guarantees on how the algorithm makes a \emph{global progress} towards it \emph{local objectives}. A prominent example is the results of \cite{luby1985simple, alon1986fast} where the analysis shows that per round, in expectation, half of the edges of the whole network get removed\footnote{These analysis do not provide any uniformity guarantee for the removed edges.}, hence leading to the \emph{global complexity} guarantee that after $O(\log n)$ rounds, with high probability, the algorithm terminates everywhere. 
See \Cref{warmup}.

This issue seemingly suggests a gap in our understanding of \emph{locality}. The starting point in this paper is to question whether this global mentality is necessary for obtaining the \emph{tight} bound\footnote{Without insisting on tightness, many straightforward (but weak) complexities can be given using local analysis.}. That is, can we instead provide a tight bound using \emph{local analysis}, i.e., an analysis that only looks at a node and some small neighborhood of it? To make the difference more sensible, let us imagine $n\rightarrow \infty$ and seek time-guarantees independent of $n$. 

Of course this brings to mind locality-based lower bounds which at first glance can seem to imply a negative answer: Linial\cite{linial1992locality} shows that even in a simple \emph{cycle} graph, MIS needs $\Omega(\log^* n)$ rounds, and Kuhn, Moscibroda and Wattenhofer\cite{kuhn2004localLB} prove that it requires $\Omega(\sqrt{\log n})$ rounds in some well-crafted graphs. But there is a catch: these lower bounds state that the time till all nodes terminate is at least so much. One can still ask, what if we want a time-guarantee for \emph{each single node} instead of \emph{all nodes}? While in the deterministic case these time-guarantees, called respectively \emph{local} and \emph{global} complexities, are equivalent, they can differ when the guarantee that is to be given is probabilistic, as is usual in randomized algorithms. Note that, the local complexity is quite a useful guarantee, even on its own. For instance, the fact that in a cycle, despite Linial's beautiful $\Omega(\log^* n)$ lower bound, the vast majority of nodes are done within $O(1)$ rounds is a meaningful property and should not be ignored. To be concrete, our starting question now is:
\vspace{3pt}
\begin{mdframed}[hidealllines=false,backgroundcolor=gray!10]
\vspace{-1pt}
\textbf{Local Complexity Question}: How long does it take till each particular node $v$ terminates, and knows whether it is in the (computed) MIS or not, with probability at least $1-\eps$?
\vspace{-2pt}
\end{mdframed}

Using $\Delta$ to denote the maximum degree, one can obtain answers such as $O(\log^{2} \Delta+ \log 1/\eps)$ rounds for Luby's algorithm, or $O(\log \Delta  \log \log \Delta$ $+ \log \Delta \log 1/\eps)$ rounds for the variant of Luby's used by Barenboim, Elkin, Pettie, and Schneider\cite{barenboim2012locality} and Chung, Pettie, and Su~\cite{chung2014LLL}. However, both of these bounds seem to be off from the right answer; e.g., one cannot recover from these the standard $O(\log n)$ high probability global complexity bound. In the first bound, the first term is troublesome and in the latter, the second term becomes the bottleneck. In both, the high probability bound becomes $O(\log^2 n)$ when one sets $\Delta=n^\delta$ for a constant $\delta>0$.

We present an extremely simple algorithm that overcomes this problem and provides a local complexity of $O(\log \Delta+ \log 1/\eps)$. More formally, we prove that:

\begin{theorem}
\label{thm:local} There is a randomized distributed MIS algorithm for which, for each node $v$, the probability that $v$ has not made its decision after the first $O(\log \mathsf{deg}(v) + \log 1/\eps)$ rounds is at most $\eps$. Furthermore, this holds even if the bits of randomness outside the $2$-hops neighborhood of $v$ are determined adversarially.
\end{theorem}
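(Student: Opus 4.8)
\emph{The algorithm I expect to analyze.} I would work with the ``desire-level'' process: each surviving node $v$ carries a value $p_t(v)\in\{2^{-1},2^{-2},\dots\}$ with $p_1(v)=1/2$; in round $t$ node $v$ \emph{marks} itself with probability $p_t(v)$, and if $v$ is marked while none of its surviving neighbors is marked, then $v$ joins the MIS and $v$ together with $N(v)$ is removed. Writing $d_t(v):=\sum_{u\in N(v)}p_t(u)$ for the \emph{effective degree}, the update is $p_{t+1}(v)=p_t(v)/2$ if $d_t(v)\ge 2$, and $p_{t+1}(v)=\min\{2p_t(v),\,1/2\}$ otherwise. Since $p_1\equiv 1/2$ we have $d_1(v)=\mathsf{deg}(v)/2$, and this is exactly where the $\log\mathsf{deg}(v)$ term will come from: the analysis must ``burn down'' the effective degree around $v$ from $\mathsf{deg}(v)/2$ to $O(1)$, which can only happen a constant factor per round.

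\emph{Two good events (and why $2$ hops suffice).} For $v$ still present at the start of round $t$, I would call round $t$ \emph{type-1 golden} if $p_t(v)=1/2$ and $d_t(v)\le 2$, and \emph{type-2 golden} if $d_t(v)\ge 2$. The first lemma: in a type-1 golden round, conditioned on the whole history, $v$ joins the MIS with probability $\ge 1/32$ — it is marked with probability $1/2$, and using $1-x\ge 4^{-x}$ for $x\in[0,\tfrac12]$ the chance no neighbor is marked is $\ge 4^{-d_t(v)}\ge 4^{-2}$. The second lemma handles a type-2 golden round via a dichotomy on $N(v)$ according to whether $d_t(u)\ge 2$ (``busy'') or $d_t(u)<2$: if the busy neighbors carry at least three quarters of the effective mass, then since their desire levels halve one checks $d_{t+1}(v)\le \tfrac78 d_t(v)$; otherwise the low-effective-degree neighbors carry $\Omega(1)$ mass, and a Bonferroni/inclusion–exclusion bound applied to a sub-collection of them whose $p$-values sum to a sufficiently small absolute constant (pairwise ``both join'' is bounded by ``both marked'' $=p_t(u)p_t(u')$) shows that with constant probability one of them joins the MIS — which also removes $v$. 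Both lemmas only read the marking coins of $v$, of $N(v)$, and of $N(N(v))$, conditioned on the current desire levels; hence an adversary controlling coins outside the $2$-hop ball is part of the conditioned history and cannot affect these constant success probabilities.

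\emph{Bookkeeping and the probabilistic wrap-up.} Suppose $v$ is still undecided after $T$ rounds. First, track the nonnegative integer $x_t:=\log_2(1/p_t(v))$, which starts at $1$, moves by $\pm1$ or $0$ per round, and increases exactly in the rounds with $d_t(v)\ge 2$: from $x_{T+1}\ge x_1$ the number of increases is at least the number of decreases, the decreases are precisely the non-golden rounds, and so at least $T/2$ of the $T$ rounds are golden. Second, among the type-2 golden rounds, the ``busy'' ones shrink $d_t(v)$ by a constant factor, every other round multiplies it by at most $2$, and any round with $d_t(v)<2$ can only push it below $4$; amortizing against $d_1(v)=\mathsf{deg}(v)/2$ bounds the number of busy rounds by $O(\log \mathsf{deg}(v))$ plus $O(\#\{\text{other type-2 golden rounds}\})$ plus lower-order terms. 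Combining the two counts: once $T\ge C\log\mathsf{deg}(v)$ for a suitable constant $C$, at least $\Omega(T)$ of the first $T$ rounds are \emph{productive}, meaning type-1 golden or type-2 golden of the non-busy sub-type, and in each such round $v$ is removed with probability $\ge c$ for a fixed $c>0$, conditioned on the history. A stopping-time argument over the (random) subsequence of productive rounds then gives $\Pr[v\text{ survives all of them}]\le(1-c)^{\Omega(T)}$, which is $\le\eps$ for $T=\Theta(\log\mathsf{deg}(v)+\log(1/\eps))$. The corner case in which $v$ loses all its neighbors without being decided is harmless: then $d_t(v)=0$, so $p_t(v)$ rises to $1/2$ and $v$ joins the MIS within a further $O(\log(1/\eps))$ rounds with probability $1-\eps$.

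\emph{Main obstacle.} The delicate part is the analysis of the high-effective-degree (type-2 golden) rounds: proving the dichotomy ``either a neighbor joins the MIS with constant probability, or the effective degree strictly shrinks,'' making the inclusion–exclusion in the first horn go through with honest absolute constants, and then running the amortization in the second horn so that the number of shrinking rounds is genuinely $O(\log\mathsf{deg}(v))$ up to lower-order terms — this is precisely the step that yields the optimal degree dependence, and it has to be robust to adversarial coins outside the $2$-hop neighborhood.
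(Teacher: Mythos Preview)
Your proposal is essentially the paper's proof: the algorithm is identical, and the analysis has the same architecture---two kinds of ``golden'' rounds, a potential argument on $p_t(v)$ (you track $x_t=\log_2(1/p_t(v))$; the paper tracks $p_t$ directly) to show many rounds are golden, a shrinkage argument on $d_t(v)$ to bound the unproductive rounds, and a constant removal probability in each productive round. Two cosmetic differences: you prove the ``a low-degree neighbor joins'' event via Bonferroni on a subcollection of bounded total $p$-mass, whereas the paper exposes the low-degree neighbors sequentially, stops at the first marked one, and bounds the probability its own neighborhood is unmarked by $4^{-d_t(u)}>1/16$; both are fine.

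There is, however, one genuine soft spot in your amortization. With your definitions (type-2 golden $\Leftrightarrow d_t(v)\ge 2$, then split into busy/non-busy), the claim ``\#busy $\le O(\log\mathsf{deg}(v))+O(\#\text{non-busy})+\text{l.o.t.}$'' does not follow as stated: each re-entry of $d_t(v)$ into $[2,\infty)$ resets the starting value to below $4$, so the per-interval bound is $B_i\le C(N_i+1)$, and summing produces an extra additive \#intervals term. That term is not a priori lower order---it is bounded only by $\#\{\text{rounds with }d_t(v)<2\}$, which can be $\Theta(T)$---so the chain of inequalities does not close with your thresholds. The paper handles exactly this ``zigzag'' issue (see its footnote) by introducing a slack between the two thresholds: the productive/non-busy condition is imposed already at $d_t(v)\ge 1$ (paper's type-2 golden requires $d_t(v)\ge 1$ together with low-degree contribution $\ge d_t(v)/10$), while the count $h$ being bounded is the number of rounds with $d_t(v)\ge 2$. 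Since $d_{t+1}(v)\le 2d_t(v)$, any transition into $\{d_t\ge 2\}$ must pass through $d_t\in[1,2)$, and in that round either $d_t$ shrinks (impossible for a transition up) or it is the paper's type-2 golden; hence every re-entry is charged to a non-busy round and the amortization gives $h\le \log_{3/2}\mathsf{deg}+3g_2$ cleanly. Dropping your type-2 threshold to $d_t(v)\ge 1$ (and keeping the busy/non-busy split there) fixes your sketch in the same way.
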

The perhaps surprising fact that the bound only depends on the degree of node $v$, even allowing its neighbors to have infinite degree, demonstrates the \emph{truly local} nature of this algorithm. 
The logarithmic degree-dependency in the bound is optimal, following a lower bound of Kuhn, Moscibroda and Wattenhofer~\cite{kuhn2004localLB}: As indicated by \cite{Fabian}, with minor changes in the arguments of~\cite{kuhn2004localLB}, one can prove that there are graphs in which, the time till each node $v$ can know if it is in MIS or not with constant probability is at least $\Omega(\log \Delta)$ rounds. 

Finally, we note that the fact that \emph{the proof has a locality of $2$-hops}---meaning that the analysis only looks at the $2$-hops neighborhood and particularly, that the guarantee relies only on the coin tosses within the $2$-hops neighborhood of node $v$---will prove vital as we move to global complexity. This might be interesting for practical purposes as well. 

\subsection{Global Complexity}
\label{subsec:global}
Notice that \Cref{thm:local} easily recovers the standard result that after $O(\log n)$ rounds, w.h.p., all nodes have terminated, but now with a local analysis. In light of the $\Omega(\min\{\log \Delta, \sqrt{\log n}\})$ lower bound of Kuhn et al.~\cite{kuhn2004localLB}, it is interesting to find the best possible upper bound, specially when $\log \Delta = o(\log n)$. The best known bound prior to this work was $O(\log^2 \Delta) + 2^{O(\sqrt{\log \log n})}$ rounds, due to Barenboim et al.\cite{barenboim2012locality}.

The overall plan is based on the following nice and natural intuition, which was used in the MIS results of Alon et al.\cite{alon2012LCA} and Barenboim et al.\cite{barenboim2012locality}. We note that this general strategy is often attributed to Beck, as he used it first in his breakthrough algorithmic version of the Lov{\'a}sz Local Lemma\cite{beck1991LLL}. Applied to MIS, the intuition is that, when we run any of the usual randomized MIS algorithms, nodes get removed probabilistically more and more over time. If we run this \emph{base algorithm} for a certain number of rounds, a \emph{graph shattering} type of phenomena occurs. That is, after a certain time, what remains of the graph is a number of ``small'' components, where small might be in regard to size, (weak) diameter, the maximum size of some specially defined independent sets, or some other measure. Once the graph is shattered, one switches to a deterministic algorithm to \emph{finish off} the problem in these remaining small components. 

Since we are considering graphs with max degree $\Delta$, even ignoring the troubling probabilistic dependencies (which are actually rather important), a simplistic intuition based on \emph{Galton-Watson branching processes} tells us that the graph shattering phenomena starts to show up around the time that the probability $\eps$ of each node being left falls below $1/\Delta$\footnote{In truth, the probability threshold is $1/\poly(\Delta)$, because of some unavoidable dependencies. But due to the exponential concentration, the time to reach the $1/\poly(\Delta)$ threshold is within a constant factor of that of the $1/\Delta$ threshold. We will also need to establish some independence, which is not discussed here. See \Cref{sec:global}.}. Alon et al.\cite{alon2012LCA} used an argument of Parnas and Ron~\cite{parnas2007approximating}, showing that Luby's algorithm reaches this threshold after $O(\Delta \log \Delta)$ rounds. Barenboim et al.\cite{barenboim2012locality} used a variant of Luby's, with a small but clever modification, and showed that it reaches the threshold after $O(\log^2 \Delta)$ rounds. As Barenboim et al.\cite{barenboim2012locality} show, after the shattering, the remaining pieces can be solved deterministically, via the help of known deterministic MIS algorithms (and some other ideas), in $\log \Delta \cdot 2^{O(\sqrt{\log \log n})}$ rounds. Thus, the overall complexity of \cite{barenboim2012locality} is $O(\log^2 \Delta) + \log \Delta \cdot 2^{O(\sqrt{\log \log n})} = O(\log^2 \Delta) + 2^{O(\sqrt{\log \log n})}$. 

To improve this, instead of Luby's, we use our new MIS algorithm as the base, which as \Cref{thm:local} suggests, reaches the shattering threshold after only $O(\log \Delta)$ rounds. This will be formalized in \Cref{sec:global}. We will also use some minor modifications for the \emph{post-shattering} phase to reduce it's complexity from $\log \Delta \cdot 2^{O(\sqrt{\log \log n})}$ to $2^{O(\sqrt{\log \log n})}$. The overall result thus becomes:
\begin{theorem}
\label{thm:global} There is a randomized distributed MIS algorithm that terminates after $O(\log\Delta) + 2^{O(\sqrt{\log \log n})}$ rounds, with probability at least $1-1/n$.
\end{theorem}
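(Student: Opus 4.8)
The plan is the Beck-style \emph{shatter-then-clean-up} strategy, but fed with a base process --- the algorithm of \Cref{thm:local} --- that reaches the shattering threshold in only $O(\log\Delta)$ rounds rather than the $O(\log^{2}\Delta)$ of the Luby variant used by Barenboim et al. Concretely, I would first run the algorithm of \Cref{thm:local} for $T=\Theta(\log\Delta)$ rounds, with its threshold parameter set to $\eps:=\Delta^{-c}$ for a large absolute constant $c$ (so that $T\ge O(\log\deg(v)+\log 1/\eps)$ holds for every vertex $v$). By \Cref{thm:local} each vertex is still undecided after round $T$ with probability at most $\Delta^{-c}$; and --- this is where the ``adversarial randomness outside the $2$-hop ball'' clause is essential --- for any set $S$ of vertices that are pairwise at distance at least $5$ in $G$, the indicator events ``$v$ undecided'', $v\in S$, depend on pairwise-disjoint blocks of coins, so the probability that \emph{every} vertex of $S$ is undecided is at most $\Delta^{-c|S|}$.

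Next I would run the standard shattering argument on the subgraph $H$ induced by the undecided vertices. If some connected component of $H$ had more than $N:=\poly(\Delta)\cdot\log n$ vertices, then it would contain a set $S$ of $k=\Omega(\log_\Delta n)$ vertices that are pairwise at distance $\ge 5$ in $G$ and that induce a connected subgraph in a fixed constant power $G^{O(1)}$; since the number of such ``shapes'' of size $k$ anchored at a given vertex is at most $(\Delta^{O(1)})^{k}$, a union bound over the $n$ anchors against the probability $\Delta^{-ck}$ above is $o(1/n)$ once $c$ is a large enough constant. Hence, with probability at least $1-1/n$, every component of $H$ has at most $\poly(\Delta)\cdot\log n$ vertices and, by the same shape argument, weak diameter $O(\log_\Delta n)$.

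It remains to finish MIS on each surviving component, which I would do with a deterministic subroutine, run in parallel across the (vertex-disjoint, mutually non-adjacent) components. Plugging $N=\poly(\Delta)\log n$ into a black-box deterministic MIS routine costs about $2^{O(\sqrt{\log N})}\le 2^{O(\sqrt{\log\Delta})}\cdot 2^{O(\sqrt{\log\log n})}$ rounds, which is unacceptable since $2^{O(\sqrt{\log\Delta})}$ can be $\omega(\log\Delta)$. Instead I would adapt the post-shattering machinery of Barenboim et al., exploiting that each component simultaneously has few vertices, small weak diameter $O(\log_\Delta n)$, and degree at most $\Delta$, so as to shrink the relevant size parameter before invoking any network-decomposition-based routine; this yields a post-shattering cost of $2^{O(\sqrt{\log\log n})}$ (the analysis of Barenboim et al.\ incurs an extra $\log\Delta$ factor here, which is exactly what is being shaved). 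Summing the two phases gives $O(\log\Delta)+2^{O(\sqrt{\log\log n})}$ rounds, with the single $1/n$ failure probability coming from the shattering step.

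The main obstacle I anticipate is this deterministic clean-up: it must cost $2^{O(\sqrt{\log\log n})}$ and not the naive $2^{O(\sqrt{\log\Delta})}$, which forces one to strip the $\poly(\Delta)$ blow-up out of the component size (and the $\log\Delta$ factor charged by Barenboim et al.) before applying any subroutine whose running time is super-polylogarithmic in its size parameter. By comparison, the shattering step is a routine union bound the moment the $2$-hop locality of \Cref{thm:local} is used to make the ``undecided'' events independent at distance $5$; the only care needed there is to take the constant $c$ in $\eps=\Delta^{-c}$ large enough to dominate the $\Delta^{O(1)}$ enumeration of shapes.
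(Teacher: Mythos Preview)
Your proposal is correct and follows essentially the same route as the paper: run the new algorithm for $\Theta(\log\Delta)$ rounds, use the $2$-hop locality of \Cref{thm:local} to get $\Delta^{-c|S|}$ for $5$-independent sets $S$, do the tree-embedding union bound to shatter, and then run a deterministic finish-off on the survivors. The one place you leave as a black box is exactly where the paper spends its effort: to strip the $\poly(\Delta)$ from the component-size parameter before invoking Panconesi--Srinivasan, the paper first computes a $(5,\Theta(\log\log n))$-ruling set $R_C$ inside each surviving component (using the algorithm of Schneider, Elkin, and Wattenhofer, not the standard $(5,O(\log\Delta))$-ruling set that Barenboim et al.\ used --- this is precisely the shaved $\log\Delta$ factor you flag), clusters the component around $R_C$, and argues via property (P1) that the contracted graph has at most $\log_\Delta n$ supernodes; Panconesi--Srinivasan on that contracted graph then costs $2^{O(\sqrt{\log\log n})}$, with each round simulated in $O(\log\log n)$ real rounds.
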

This improves the best-known bound for MIS and gets close to the $\Omega(\min\{\log \Delta, \sqrt{\log n}\})$ lower bound of Kuhn et al.\cite{kuhn2004localLB}, which at the very least, shows that the upper bound is provably optimal when $\log \Delta \in [2^{\sqrt{\log\log n}}, \sqrt{\log n}]$. Besides that, the new result matches the lower bound in a stronger and much more instructive sense: as we will discuss in point (C2) below, it perfectly pinpoints why the current lower bound techniques cannot prove a lower bound better than $\Omega(\min\{\log \Delta, \sqrt{\log n}\})$.
\subsection{Other Implications}
\label{subsec:implications}

Despite its extreme simplicity, the new algorithm turns out to lead to several implications, when combined with some known results and/or techniques:
\begin{enumerate}
\item[(C1)] Combined with the finish-off phase results of Barenboim et al.\cite{barenboim2012locality}, we get MIS algorithms with complexity $O(\log \Delta) + O(\min\{\lambda^{1+\eps}+\log \lambda \log\log n, \lambda+ \lambda^{\eps} \log\log n, \lambda+ (\log \log n)^{1+\eps}  \})$ for graphs with arboricity $\lambda$. Moreover, combined with the low-arboricity to low-degree reduction of Barenboim et al.\cite{barenboim2012locality}, we get an MIS algorithm with complexity $O(\log \lambda + \sqrt{\log n})$. These bounds improve over some results of \cite{barenboim2012locality}, Barenboim and Elkin\cite{barenboim2010sublogarithmic}, and Lenzen and Wattenhofer\cite{lenzen2011mis}. 

\item[(C2)] \label{tightness} The new results highlight the barrier of the current lower bound techniques. In the known locality-based lower bound arguments, including that of~\cite{kuhn2004localLB}, to establish a $T$-round lower bound, it is necessary that within $T$ rounds, each node sees only a tree. That is, each $T$-hops neighborhood must induce a tree, which implies that the girth must be at least $2T+1$. Since any $g$-girth graph has arboricity $\lambda \leq O(n^{\frac{2}{g-2}})$, from (C1), we get an $O(\sqrt{\log n})$-round MIS algorithm when $g=\Omega(\sqrt{\log n})$. More precisely, for any graph with girth $g=\Omega(\min\{\log \Delta, \sqrt{\log n}\})$, we get an $O(\min\{\log \Delta + 2^{O(\sqrt{\log \log n})}, \sqrt{\log n}\})$-round algorithm. Hence, the $\Omega(\min\{\log \Delta, \sqrt{\log n}\})$ lower bound of~\cite{kuhn2004localLB} is essentially the best-possible when the the topology seen by each node within the allowed time must be a tree. This means, to prove a better lower bound, one has to part with these \emph{``tree local-views"} topologies. However, that gives rise to intricate challenges and actually, to the best of our knowledge, there is no distributed locality-based lower bound, in fact for any (local) problem, that does not rely on \emph{tree local-views}.


\item[(C3)] We get an $O(\sqrt{\log n})$-round MIS algorithm for Erd\"{o}s-R\'{e}nyi random graphs $G(n, p)$. This is because, if $p =\Omega(\frac{2^{\sqrt{\log n}}}{n})$, then with high probability the graph has diameter $O(\sqrt{\log n})$ hops (see e.g. \cite{chung2001diameter}) and when $p =O(\frac{2^{\sqrt{\log n}}}{n})$, with high probability, $\Delta = O(2^{\sqrt{\log n}})$ and thus, the algorithm of \Cref{thm:global} runs in at most $O(\sqrt{\log n})$ rounds.

\item[(C4)] Combined with a recursive sparsification method of Bisht et al.\cite{bisht2014brief}, we get a $(2, \beta)$-ruling-set algorithm with complexity $O(\beta \log^{1/\beta} \Delta) + 2^{O(\sqrt{\log\log n})}$, improving on the complexities of \cite{barenboim2012locality} and \cite{bisht2014brief}. An $(\alpha, \beta)$-ruling set $S$ is a set where each two nodes in $S$ are at distance at least $\alpha$, and each node $v\in V\setminus S$ has a node in $S$ within its $\beta$-hops. So, a $(2, 1)$-ruling-set is simply an MIS. The term $O(\beta \log^{1/\beta} \Delta)$ is arguably (and even \emph{provably}, as \cite{Fabian} indicated,) best-possible for the current method, which roughly speaking works by computing the ruling set iteratively using $\beta$ successive reductions of the degree. 

\item[(C5)] In the Local Computation Algorithms (LCA) model of Rubinfeld et al.\cite{rubinfeld2011fast} and Alon et al.\cite{alon2012LCA}, we get improved bounds for computing MIS. Namely, the best-known time and space complexity improve from, respectively, $2^{O(\log^3 \Delta)} \log^3 n$ and $2^{O(\log^3 \Delta)} \log^2 n$ bounds of Levi, Rubinfeld and Yodpinyanee~\cite{levi2015local} to $2^{O(\log^2 \Delta)} \log^3 n$ and $2^{O(\log^2 \Delta)} \log^2 n$.

\item[(C6)] We get a Weak-MIS algorithm with complexity $O(\log \Delta)$, which thus improves the round complexity of the distributed algorithmic version of the Lov\'{a}sz Local Lemma presented by Chung, Pettie, and Su~\cite{chung2014LLL} from $O(\log_{\frac{1}{ep(\Delta+1)}} n \cdot \log^2 \Delta)$ to  $O(\log_{\frac{1}{ep(\Delta+1)}} n \cdot \log \Delta)$. Roughly speaking, a Weak-MIS computation should produce an independent set $S$ such that for each node $v$, with probability at least $1-1/\poly(\Delta)$, $v$ is either in $S$ or has a neighbor in $S$. 

\item[(C7)] We get an $O(\log \Delta + \log \log \log n)$-round MIS algorithm for the $\mathsf{CONGESTED}$-$\mathsf{CLIQUE}$ model where per round, each node can send $O(\log n)$-bits to each of the other nodes (even those non-adjacent to it): After running the MIS algorithm of \Cref{thm:local} for $O(\log \Delta)$ rounds, w.h.p., if $\Delta \geq n^{0.1}$, we are already done, and otherwise, as \Cref{lem:shattering} shows, all leftover components have size $o(n^{0.5})$. In the latter case, using the algorithm of \cite{fastMST-congestclique}, we can make all nodes know the \emph{leader} of their component in $O(\log \log \log n)$ rounds, and using Lenzen's routing\cite{lenzen2013route}, we can make each leader learn the topology of its whole component, solve the related MIS problem locally, and send back the answers, all in $O(1)$ rounds.  

\end{enumerate}

\section{Warm Up: Local Analysis of Luby's Algorithm}
\label{warmup}
As a warm up for the MIS algorithm of the next section, here, we briefly review Luby's algorithm and present some local analysis for it. The main purpose is to point out the challenge in (tightly) analyzing the local complexity of Luby's, which the algorithm of the next section tries to bypass.

\paragraph{Luby's Algorithm} The algorithm of \cite{luby1985simple, alon1986fast} is as simple and clean as this: \bigskip

\vspace{-10pt}
{\centering
\begin{enumerate} 
\item[] ``\textit{In each \texttt{round}, each node picks a random number\footnote{One can easily see that a precision of $O(\log \Delta)$ bits suffices.} uniformly from $[0, 1]$; strict local \\minimas join the MIS, and get removed from the graph along with their neighbors.}"
\end{enumerate}
}

\noindent Note that each round of the algorithm can be easily implemented in $2$ communication rounds on $G$, one for exchanging the random numbers and the other for informing neighbors of newly joined MIS nodes. Ignoring this 2 factor, in the sequel, by \emph{round} we mean one round of the algorithm.

\paragraph{Global Analysis} The standard method for analyzing Luby's algorithm goes via looking at the whole graph, i.e., using a \emph{global view}. See \cite[Section 12.3]{motwani-raghavan}, \cite[Section 8.4]{peleg:2000}, \cite[Section 4.5]{lynch1996distributed} for textbook treatments. We note that this is the only known way for proving that this algorithm terminates everywhere in $O(\log n)$ rounds with high probability. The base of the analysis is to show that per iteration, in expectation, at least half of the edges (of the whole remaining graph) get removed. Although the initial arguments in \cite{luby1985simple, alon1986fast} were more lengthy, Yves et al.\cite{YvesMIS} pointed out a much simpler argument for this. See \Cref{Yves}, which describes (a paraphrased version of) their argument. By Markov's inequality, this per-round halving implies that after $O(\log n)$ rounds, the algorithm terminates everywhere, with high probability.

\subsection{Local Analysis: Take 1}
To analyze the algorithm in a local way, and to bound its local complexity, the natural idea is to say that over time, each local neighborhood gets ``simplified". Particularly, the first-order realization of this intuition would be to look at the degrees and argue that they shrink with time. The following standard observation is the base tool in this argument:  

\begin{claim}\label{clm:degree-drop} Consider a node $u$ at a particular round, let $d(u)$ be its degree and $d_{max}$ be the maximum degree among the nodes in the inclusive neighborhood $N^{+}(u)$ of $u$. The probability that $u$ is removed in this round is at least $\frac{d(u)+1}{d(u)+d_{max}}$.  
\end{claim}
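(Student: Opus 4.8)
The plan is to lower bound the probability by exhibiting a convenient family of \emph{disjoint} events, each of which forces $u$ to be removed, and summing their probabilities. Write $N^{+}(w)$ for the inclusive neighborhood of a node $w$ and recall that (assuming, as we may, that the random values are pairwise distinct) a node $w$ joins the MIS exactly when it attains the minimum value in $N^{+}(w)$, and that $u$ is removed as soon as \emph{some} $w\in N^{+}(u)$ joins the MIS. The first step is the simple observation that if $w \in N^{+}(u)$ happens to attain the minimum random value in the larger set $N^{+}(u)\cup N^{+}(w)$, then in particular $w$ is the minimum of $N^{+}(w)$, so $w$ joins the MIS and $u$ is removed.

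Let $A_w$ denote the event ``$w$ attains the minimum value of $N^{+}(u)\cup N^{+}(w)$'', for $w\in N^{+}(u)$. The key point is that the events $\{A_w\}_{w\in N^{+}(u)}$ are pairwise disjoint: if $w_1\neq w_2$ are both in $N^{+}(u)$ and $A_{w_1}$ holds, then since $w_2\in N^{+}(u)\subseteq N^{+}(u)\cup N^{+}(w_1)$ we get $r(w_1)<r(w_2)$, and symmetrically $A_{w_2}$ would give $r(w_2)<r(w_1)$, a contradiction. Hence
\[
\Prob{u\text{ is removed}} \;\ge\; \Prob{\bigcup_{w\in N^{+}(u)} A_w} \;=\; \sum_{w\in N^{+}(u)} \Prob{A_w}.
\]

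It remains to bound each $\Prob{A_w}$ from below. Since the random values are i.i.d.\ and continuous, $\Prob{A_w} = 1/\big|N^{+}(u)\cup N^{+}(w)\big|$. For $w$ a neighbor of $u$ we have $u,w\in N^{+}(u)\cap N^{+}(w)$, so $|N^{+}(u)\cap N^{+}(w)|\ge 2$ and therefore
\[
\big|N^{+}(u)\cup N^{+}(w)\big| \;\le\; (d(u)+1)+(d(w)+1)-2 \;=\; d(u)+d(w) \;\le\; d(u)+d_{max},
\]
and for $w=u$ the set is just $N^{+}(u)$, of size $d(u)+1\le d(u)+d_{max}$. Thus $\Prob{A_w}\ge \tfrac{1}{d(u)+d_{max}}$ for every one of the $|N^{+}(u)| = d(u)+1$ choices of $w$, and summing gives $\Prob{u\text{ is removed}}\ge \frac{d(u)+1}{d(u)+d_{max}}$, as claimed.

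The only place that needs care is the choice of events: the more obvious candidate — ``the minimum of $N^{+}(u)$ joins the MIS'' — is not a disjoint union over a natural index set and does not directly give the clean $d(u)+d_{max}$ denominator, so the mild enlargement to $N^{+}(u)\cup N^{+}(w)$ (which keeps the events disjoint while still forcing removal) is the crux. The degenerate case $d_{max}=0$ (all nodes isolated) is trivial and can be excluded or handled by the convention that a probability bound exceeding $1$ is read as $1$.
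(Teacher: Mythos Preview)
Your proof is correct and is essentially the same argument as the paper's, just made more explicit: the paper considers the node $u^*$ attaining the minimum in $N^+(u)$ and bounds the probability that $u^*$ is also the minimum in $N^+(u)\cup N^+(u^*)$, which, when unpacked over the possible identities of $u^*$, is precisely your disjoint union $\bigcup_{w\in N^+(u)} A_w$. Your version has the virtue of making the disjointness and the size bound $|N^+(u)\cup N^+(w)|\le d(u)+d_{max}$ fully transparent, whereas the paper leaves these implicit.
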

\begin{proof} Let $u^*$ be the node in $N^{+}(u)$ that draws the smallest random number. If $u^*$ actually has the smallest in its own neighborhood, then it will join MIS which means $u$ gets removed. Since all numbers are iid random variables, and as $u^*$ is the smallest number of $d(u)+1$ of them, the probability that it is the smallest both in its own neighborhood and the neighborhood of $u$ is at least $\frac{d(u)+1}{d(u)+d_{max}}$. This is because, the latter is a set of size at most $d(u)+d_{max}$.
\end{proof}

From the claim, we get that if the degree of a node $u$ is at least half of that of the max of its neighbors, then in one round, with probability at least $1/3$, $u$ gets removed. Thus, in $\alpha=O(1)$ rounds from the start, either $u$ is removed or its degree falls below $\Delta/2$, with probability at least $1/2$. We would like to continue this argument and say that every $O(1)$ rounds, $u$'s degree shrinks by another 2 factor, thus getting a bound of $O(\log \Delta)$. However, this is not straightforward as $u$'s degree drops might get delayed because of delays in the degree drops of $u$'s neighbors. The issue seems rather severe as the degree drops of different nodes can be positively correlated.

Next, we explain a simple argument giving a weak but still local complexity of $O(\log^{2.5} \Delta + \log \Delta \log 1/\eps)$ rounds: For the purpose of this paragraph, let us say a removed node has degree $0$. From above, we get that after $10\alpha\log^{1.5} \Delta$ rounds, the probability that $u$ still has degree at least $\Delta/2$ is at most $2^{-10\log^{1.5} \Delta}$. Thus, using a union bound, we can say that with probability at least $1-(\Delta+1) 2^{-10\log^{1.5} \Delta}$, after $10\alpha\log^{1.5} \Delta$ rounds, $u$ and all its neighbors have degree at most $\Delta/2$. Hence, with probability at least $1-(\Delta+2) 2^{-10\log^{1.5} \Delta}$, after $20\alpha\log^{1.5} \Delta$ rounds, node $u$ has another drop and its degree is at most $\Delta/4$. Continuing this argument pattern recursively for $\log^{0.5} \Delta$ iterations, we get that with probability at least $1-(\Delta+2)^{\log^{0.5} \Delta} \;\cdot\; 2^{-10\log^{1.5} \Delta} \geq 1-2^{-5\log^{1.5} \Delta}$, after $10\alpha\log^{2} \Delta$ rounds, node $u$'s degree has dropped to $\Delta/2^{\log^{0.5} \Delta}$. Now, we can repeat a similar argument, but in blocks of $10\alpha\log^{2} \Delta$ rounds, and each time expecting a degree drop of $2^{\log^{0.5} \Delta}$ factor. We will be able to afford to continue this for $\log^{0.5} \Delta$ iterations and say that, after $10\alpha\log^{2.5} \Delta$ rounds, with probability at least $1-(\Delta+2)^{\log^{0.5} \Delta} \;\cdot\; 2^{-5\log^{1.5} \Delta} \geq 1-2^{-\log^{1.5} \Delta}$, the degree of $u$ has dropped to $1/2$. Since a degree less than $1/2$ means degree $0$, which in turn implies that $v$ is removed, we get that $v$ is removed after at most $O(\log^{2.5} \Delta)$ rounds with probability at least $1-2^{-\Omega(\log^{1.5} \Delta)}$. A simple repetition argument proves that this generalizes to show that after $O(\log^{2.5} \Delta + \log \Delta \log 1/\eps)$ rounds, node $u$ is removed with probability at least $1-\eps$.

In the full version of this paper, we will present a stronger (but also much more complex) argument which proves a local complexity of $O(\log^2 \Delta + \log 1/\eps)$ for the same algorithm. This bound has the desirable additive $\log 1/\eps$ dependency on $\eps$ but it is still far from the best possible bound, due to the first term.

\subsection{Local Analysis: Take 2}
Here, we briefly explain the modification of Luby's algorithm that Barenboim et al.\cite{barenboim2012locality} use. The key is the following clever idea: they \emph{manually} circumvent the problem of nodes having a lag in their degree drops, that is, they \emph{kick out} nodes that their degree drops is lagging significantly out of the algorithm, as these nodes can create trouble for other nodes in their vicinity. 

Formally, they divide time into phases of $\Theta(\log \log \Delta +\log 1/\eps)$ rounds and require that by the end of phase $k$, each node has degree at most $\Delta/2^{k}$. At the end of each phase, each node that has a degree higher than the allowed threshold is \emph{kicked out}. The algorithm is run for $\log \Delta$ phases. From \Cref{clm:degree-drop}, we can see that the probability that a node that has survived up to phase $i-1$ gets \emph{kicked out} in phase $i$ is at most $2^{- \Theta(\log \log \Delta +\log 1/\eps)} = \frac{\eps}{\log \Delta}$. Hence, the probability that a given node $v$ gets kicked out in one of the $\log \Delta$ phases is at most $\eps$. This means, by the end of $\Theta(\log \Delta \log \log \Delta + \log \Delta \log 1/\eps)$ rounds, with probability $1-\eps$, node $v$ is not kicked out and is thus removed because of having degree $0$. That is, it joined or has a neighbor in the MIS. 

This $\Theta(\log \Delta \log \log \Delta + \log \Delta \log 1/\eps)$ local complexity has an improved $\Delta$-dependency (and the guarantee has some nice independence type of properties). However, as mentioned in \Cref{subsec:localIntro}, its $\eps$-dependency is not desirable, due to the $\log \Delta$ factor. Note that this is exactly the reason that the \emph{shattering threshold} in the result of Barenboim et al.\cite{barenboim2012locality} is $O(\log^2 \Delta)$ rounds.  

\section{The New Algorithm and Its Local Complexity}
Here we present a very simple and clean algorithm that guarantees for each node $v$ that after $O(\log \Delta + \log 1/\eps)$  rounds, with probability at least $1-\eps$, node $v$ has terminated and it knows whether it is in the (computed) MIS or it has a neighbor in the (computed) MIS.

\paragraph{The Intuition} Recall that the difficulty in locally analyzing Luby's algorithm was the fact that the degree-dropping progresses of a node can be delayed by those of its neighbors, which in turn can be delayed by their own neighbors, and so on (up to $\log \Delta$ hops). To bypass this issue, the algorithm presented here tries to completely disentangling the ``progress" of node $v$ from that of nodes that are far away, say those at distance above $3$.

The intuitive base of the algorithm is as follows. There are two scenarios in which a node $v$ has a good chance of being removed: either (1) $v$ is trying to join MIS and it does not have too many competing neighbors, in which case $v$ has a shot at joining MIS, or (2) a large enough number of neighbors of $v$ are trying to join MIS and each of them does not have too much competition, in which case it is likely that one of these neighbors of $v$ joins the MIS and thus $v$ gets removed. These two cases also depend only on $v$'s 2-neighborhood. Our key idea is to create an essentially deterministic \emph{dynamic} which has these two scenarios as its (more) stable points and makes each node $v$ spend a significant amount of time in these two scenarios, unless it has been removed already. 

\begin{mdframed}[hidealllines=false,backgroundcolor=gray!30]
\vspace{-5pt}
\paragraph{The Algorithm}
In each round $t$, each node $v$ has a \emph{desire-level} $p_t(v)$ for joining MIS, which initially is set to $p_0(v)=1/2$. We call the total sum of the desire-levels of neighbors of $v$ it's \emph{effective-degree} $d_{t}(v)$, i.e., $d_t(v)=\sum_{u \in N(v)} p_{t}(u)$. The desire-levels change over time as follows: $$p_{t+1}(v)= 
\begin{cases}
    p_{t}(v)/2, & \text{if } d_{t}(v)\geq 2\\
    \min\{2p_{t}(v), 1/2\},  &\text{if } d_{t}(v)< 2.
\end{cases}
$$
The desire-levels are used as follows: In each round, node $v$ gets \emph{marked} with probability $p_{t}(v)$ and if no neighbor of $v$ is marked, $v$ joins the MIS and gets removed along with its neighbors\footnotemark. 
\end{mdframed}

\addtocounter{footnote}{0}\footnotetext{There is a version of Luby's algorithm which also uses a similar marking process. However, at each round, letting $deg(v)$ denote the number of the neighbors of $v$ remaining at that time, Luby's sets the marking probability of each node $v$ to be $\frac{1}{deg(v)+1}$, which by the way is the same as the probability of $v$ being a local minima in the variant described in \Cref{warmup}. Notice that this is a very strict fixing of the marking probability, whereas in our algorithm, we change the probability dynamically/flexibly over time, trying to push towards the two desirable scenarios mentioned in the intuition, and in fact, this simple dynamic is the key ingredient of the new algorithm.}

Again, each round of the algorithm can be implemented in $2$ communication rounds on $G$, one for exchanging the desire-levels and the marks, and the other for informing neighbors of newly joined MIS nodes. Ignoring this 2 factor, in the sequel, each round means a round of the algorithm. 

\smallskip
\paragraph{The Analysis} The algorithm is clearly correct meaning that the set of nodes that join the MIS is indeed an independent set and the algorithm terminates at a node only if the node is either in MIS or adjacent to a node in MIS. We next argue that each node $v$ is likely to terminate quickly.

\begin{theorem} \label{thm:local-restate} For each node $v$, the probability that $v$ has not made its decision within the first $\beta(\log \mathsf{deg} + \log 1/\eps)$ rounds, for a large enough constant $\beta$ and where $\mathsf{deg}$ denotes $v$'s degree at the start of the algorithm, is at most $\eps$. Furthermore, this holds even if the outcome of the coin tosses outside $N^{+}_{2}(v)$ are determined adversarially.
\end{theorem}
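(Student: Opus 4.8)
The plan is to identify, for the fixed node $v$, a notion of \emph{golden round}: a round $t$ in which, conditioned on everything that happened before round $t$, node $v$ is removed (either it joins the MIS or a neighbor does) with probability at least some universal constant $c>0$. Two situations will be shown to be golden: (G1) when $d_t(v) < 2$ and $p_t(v)$ is above a fixed constant, in which case $v$ itself joins with probability at least $p_t(v)\cdot\prod_{u\in N(v)}(1-p_t(u)) \ge p_t(v)\cdot 4^{-d_t(v)} = \Omega(1)$, using the elementary inequality $1-x \ge 4^{-x}$ valid on $[0,\tfrac12]$; and (G2) when the neighbors $u\in N(v)$ with $d_t(u) < 2$ carry a constant amount of total desire $\sum_u p_t(u)$, in which case a Luby-type argument (in the spirit of the one recalled in \Cref{warmup}, exploiting that each such $u$ has bounded effective degree so $\prod_{w\in N(u)}(1-p_t(w)) \ge 4^{-d_t(u)} = \Omega(1)$) shows that one of these $u$'s joins the MIS with constant probability. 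Crucially, all events referenced here ($v$ joining, or a neighbor $u$ joining) depend only on the marking coins inside $N^+_2(v)$, since the neighborhoods involved lie in $N^+_2(v)$, and whether a round is golden is a deterministic function of the coin outcomes whatever they are; this is exactly what will make the bound robust to adversarial coins outside $N^+_2(v)$.

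Given this, the theorem reduces to a deterministic statement: \emph{in any execution (any fixing of all coins) in which $v$ survives the first $T = \beta(\log \mathsf{deg} + \log 1/\eps)$ rounds, at least $k = \Theta(\log 1/\eps)$ of those rounds are golden}. Indeed, since the golden status of round $t$ is measurable with respect to the history up to round $t$, and conditioned on that history and on round $t$ being golden $v$ survives round $t$ with probability at most $1-c$, chaining over the (at least $k$) golden rounds gives $\Pr[v\text{ survives }T\text{ rounds}] \le (1-c)^{k} \le e^{-ck} \le \eps$ for $k = \lceil c^{-1}\ln(1/\eps)\rceil$.

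For the deterministic core I would track two quantities attached to $v$: its level $\ell_t(v) = -\log_2 p_t(v) \in \{1,2,\dots\}$, which increments exactly on ``high'' rounds ($d_t(v)\ge 2$) and decrements toward a reflecting barrier at $1$ on ``low'' rounds ($d_t(v) < 2$); and the clamped log-degree $\psi_t = \log_2\max(d_t(v),2) \ge 1$. A low round that is not golden-(G1) has $p_t(v)$ small, hence forces a decrement of $\ell_t(v)$; since $\ell_0(v)=1$ and $\ell_t(v)\ge 1$ throughout, the number of decrements is at most the number of increments, so the number of non-golden low rounds is at most the number of high rounds. A high round is either golden-(G2) (its ``doubling'' neighbors, those with $d_t(u) < 2$, carry at least a constant amount of desire) or non-golden (its ``halving'' neighbors carry all but a tiny amount of $d_t(v)$). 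On a non-golden high round, $d_{t+1}(v) \le 2\cdot(\text{doubling mass}) + \tfrac12(\text{halving mass})$ forces $\psi$ to drop by a positive constant, whereas $\psi_{t+1}\le \psi_t+1$ always, and one checks that $\psi$ can increase (and that a low$\to$high transition can occur) only on a golden-(G2) round; amortizing $\psi$ over the whole execution ($\psi_0 = O(\log \mathsf{deg})$, $\psi_t \ge 1$, transitions confined to golden rounds) bounds the number of non-golden high rounds by $O(\log \mathsf{deg}) + O(\#\text{golden rounds})$. Combining the two parts, $\#\text{non-golden rounds} \le O(\log \mathsf{deg}) + O(\#\text{golden rounds})$, and since $T = \#\text{golden} + \#\text{non-golden}$ this forces $\#\text{golden} \ge \Omega(T - \log \mathsf{deg}) = \Omega(\log 1/\eps)$ once $\beta$ is a large enough constant.

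The main obstacle is precisely this last accounting: quantitatively controlling how long the adversary can keep $v$'s effective degree oscillating without ever producing a golden round. The subtle point is that a golden-(G2) round can inject a large amount of desire mass into $v$'s neighborhood, which may then be ``worked off'' over many subsequent non-golden high rounds; clamping the potential at $2$ (so that each non-golden high round costs a genuine constant drop while all increases are confined to golden rounds) is what amortizes this, and pinning down the constants and the transition bookkeeping is where the real work lies. A secondary technical point is the exact Luby-type estimate underlying (G2), which I would reduce to the standard argument already sketched in \Cref{warmup}.
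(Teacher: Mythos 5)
Your proposal matches the paper's proof essentially step for step. The paper defines the same two types of golden rounds (its type‑1 is your G1 with the constant fixed at $p_t(v)=1/2$; its type‑2 is your G2, phrased relatively as ``$d_t(v)\ge 1$ and low‑degree neighbors contribute $\ge d_t(v)/10$'' rather than via an absolute desire threshold), proves the same $\Omega(1)$ removal probability per golden round using the same $1-x\ge 4^{-x}$ estimate and the same walk over the low‑effective‑degree neighbors (all within $N^+_2(v)$), and its Lemma~3.2 is precisely your deterministic golden‑round count, proved by your two observations: $p_t(v)$‑halvings bound the genuine $p_t(v)$‑doublings, so non‑G1 low rounds are bounded by high rounds; and non‑G2 high rounds shrink $d_t(v)$ by a $2/3$ factor, amortized against the at‑most‑factor‑$2$ increases on G2 rounds. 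The one place you are right to be uneasy --- ``pinning down the constants and the transition bookkeeping'' and the clamp at $2$ --- is exactly the subtlety the paper flags in a footnote: your claim that every non‑golden high round gives a ``genuine constant drop'' in $\psi_t$ fails when $d_t(v)$ sits right at the boundary $2$, and the paper resolves this by counting rounds with $d_t(v)\ge 2$ (not $>1$), observing that any return of $d_t(v)$ above $2$ must pass through a type‑2 golden round, giving $h\le\log_{3/2}\mathsf{deg}+3g_2$ and, with $\beta\ge 1300$, at least $100(\log\mathsf{deg}+\log 1/\eps)$ golden rounds of one of the two types.
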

Let us say that a node $u$ is \emph{low-degree} if $d_t(u)<2$, and \emph{high-degree} otherwise. Considering the intuition discussed above, we define two types of \emph{golden rounds} for a node $v$: (1) rounds in which $d_t(v)<2$ and $p_{t}(v)= 1/2$, (2) rounds in which $d_{v}(t)\geq 1$ and at least $d_{t}(v)/10$ of it is contributed by low-degree neighbors. These are called golden rounds because, as we will see, in the first type, $v$ has a constant chance of joining MIS and in the second type there is a constant chance that one of those low-degree neighbors of $v$ joins the MIS and thus $v$ gets removed. For the sake of analysis, let us imagine that node $v$ keeps track of the number of golden rounds of each type it has been in.

\begin{lemma}\label{lem:goldCount} By the end of round $\beta(\log \mathsf{deg} + \log 1/\eps)$, either $v$ has joined, or has a neighbor in, the (computed) MIS, or at least one of its golden round counts reached $100 (\log \mathsf{deg} + \log 1/\eps)$.
\end{lemma}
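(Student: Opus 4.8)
The statement is a purely deterministic fact about the trajectories of the desire-levels $p_t(\cdot)$ and effective-degrees $d_t(\cdot)$: if $v$ has already been removed by round $\beta(\log\mathsf{deg}+\log1/\eps)$ the first disjunct holds, so assume $v$ survives, and note that removing a node only zeroes out its desire-level, which can only \emph{decrease} effective-degrees — so every monotone inequality below survives and the whole argument stays inside $N^{+}_2(v)$. The plan is to show that, among the first $T:=\beta(\log\mathsf{deg}+\log1/\eps)$ rounds, at least $200(\log\mathsf{deg}+\log1/\eps)$ are golden (counting both types together); then by pigeonhole one of the two counts reaches $100(\log\mathsf{deg}+\log1/\eps)$. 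I would classify each round $t<T$ as: (1) high-degree but not type-2 golden (``bad-H''), (2) high-degree and type-2 golden, (3) low-degree with $p_t(v)=1/2$ (automatically type-1 golden), (4) low-degree with $p_t(v)<1/2$ and type-2 golden, (5) low-degree with $p_t(v)<1/2$ and not type-2 golden (``bad-L''). Golden $=\{2,3,4\}$, non-golden $=\{1,5\}$; write $n_i$ for the counts.

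First I would record three one-line facts from the update rule: (a) $d_t(v)\le\mathsf{deg}/2$ and $d_{t+1}(v)\le 2d_t(v)$ always; (b) in a bad-H round $d_{t+1}(v)\le d_t(v)/2+3L_t/2<\tfrac{13}{20}d_t(v)$, where $L_t$ is the low-degree contribution to $d_t(v)$ (here $L_t<d_t(v)/10$); (c) if $d_{t+1}(v)>d_t(v)$ then $L_t>d_t(v)/3$, so in particular any round with $d_t(v)\ge 1$ in which $d$ strictly increases is type-2 golden. Next, track $b_t:=\log_2(1/p_t(v))\ge 1$ (a non-negative integer, $b_0=1$): it moves $+1$ on high-degree rounds (categories 1,2), $-1$ on low-degree rounds with $p_t(v)<1/2$ (categories 4,5), and $0$ on category 3. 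Telescoping with $b_T\ge b_0$ gives $n_1+n_2\ge n_4+n_5$, hence $n_5\le n_1+n_2$.

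The remaining job is to bound $n_1$. I would decompose time into maximal \emph{high-degree phases} (runs of consecutive rounds with $d_t(v)\ge2$); inside such a phase only categories $1$ and $2$ occur. Using $Y_t:=\log_2\max(d_t(v),1)$, which is $\ge1$ throughout a phase, drops by $\ge\gamma:=\log_2(20/13)>0.6$ on each bad-H round by (b), rises by $\le1$ on each category-2 round by (a), sits below $2$ at the start of any phase but the first (by (a) it jumps there from below $2$), and is $\ge0$ at the phase's end, telescoping inside the phase yields $\gamma\cdot(\text{\#bad-H in phase})\le(\text{\#cat-2 in phase})+O(1)$, with one extra additive $\log_2\mathsf{deg}$ allowed for the single first phase. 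The key structural point is that a high-degree phase other than the first is immediately preceded by a type-2 golden round: entering the phase means $d$ crossed from below $2$ to $\ge2$, so by (a) its previous value was $\ge1$ and $d$ strictly increased there, which by (c) makes that previous round type-2 golden. Hence the number of high-degree phases is at most $1+n_2+n_4$, and summing the per-phase bound gives $n_1=O(n_2+n_4)+O(\log\mathsf{deg})=O(\text{\#golden})+O(\log\mathsf{deg})$. With $n_5\le n_1+n_2$ this gives $\text{\#non-golden}=n_1+n_5=O(\text{\#golden})+O(\log\mathsf{deg})$, i.e.\ $\text{\#golden}\ge\Omega(T)-O(\log\mathsf{deg})$; choosing $\beta$ a large enough absolute constant makes $\text{\#golden}\ge200(\log\mathsf{deg}+\log1/\eps)$.

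The main obstacle is conceptual, not computational: one cannot hope to bound the number of non-golden rounds by $O(\log\mathsf{deg})$ outright — in a clique, for instance, about half of all rounds are bad-H forever — so the point is to \emph{charge} bad-H (and then bad-L) rounds against golden rounds rather than against the horizon $T$. Making this charging rigorous is exactly what the high-degree-phase decomposition, together with the observation that a high-degree phase can be entered only through a type-2 golden round, is designed to do; I expect that to be the delicate step, along with the bookkeeping required to confirm that the presence of already-removed nodes inside $N^{+}_2(v)$ never breaks the monotonicity facts (a)--(c).
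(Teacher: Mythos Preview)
Your argument is correct and is essentially the paper's proof reorganized: both track $\log(1/p_t(v))$ and $\log d_t(v)$ as potentials, and both hinge on the same structural fact that a round in which $d_t(v)$ crosses upward into the region $\ge 2$ must be type-2 golden (your fact~(c) is precisely the content of the paper's footnote about ``zigzags''). The paper packages the $d_t(v)$-potential step as ``each type-2 golden round cancels at most two shrinkage rounds, leaving at most $\log_{3/2}\mathsf{deg}$ net shrinkages, so $h\le \log_{3/2}\mathsf{deg}+3g_2$'', whereas you decompose into maximal high-degree phases and telescope $Y_t$ inside each; these are two presentations of the same counting.

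One small bookkeeping slip: the round preceding a non-first high-degree phase is \emph{low-degree}, hence lies in category~3 or~4, not~2; so the number of phases is at most $1+n_3+n_4$ (or simply $1+\text{\#golden}$) rather than $1+n_2+n_4$. This does not affect the conclusion, since either way you get $n_1=O(\text{\#golden})+O(\log\mathsf{deg})$.
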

\begin{proof} We focus only on the first $\beta(\log \mathsf{deg} + \log 1/\eps)$ rounds. Let $g_1$ and $g_2$ respectively be the number of golden rounds of types 1 and 2 for $v$, during this period. We assume that by the end of round $\beta(\log \mathsf{deg}+ \log 1/\eps)$, node $v$ is not removed and $g_1 \leq 100 (\log \mathsf{deg} + \log 1/\eps)$, and we conclude that, then it must have been the case that $g_2 > 100 (\log \mathsf{deg} + \log 1/\eps)$.

Let $h$ be the number of rounds during which $d_{t}(v)\geq 2$. Notice that the changes in $p_{t}(v)$ are governed by the condition  $d_{t}(v)\geq 2$ and the rounds with $d_{t}(v)\geq 2$ are exactly the ones in which $p_{t}(v)$ decreases by a $2$ factor. Since the number of $2$ factor increases in $p_{t}(v)$ can be at most equal to the number of $2$ factor decreases in it, we get that there are at least $\beta(\log \mathsf{deg} + \log 1/\eps) -2h$ rounds in which $p_{t}(v)=1/2$. Now out of these rounds, at most $h$ of them can be when $d_{t}(v)\geq 2$. Hence, $g_1 \geq \beta(\log \mathsf{deg} + \log 1/\eps) -3h$. As we have assumed $g_1 \leq 100 (\log \mathsf{deg} + \log 1/\eps)$, we get that $\beta(\log \mathsf{deg} + \log 1/\eps) -3h \leq 100 (\log \mathsf{deg} + \log 1/\eps)$. Since $\beta\geq 1300$, we get $h\geq 400 (\log \mathsf{deg} + \log 1/\eps)$.

Let us consider the changes in the effective-degree $d_{t}(v)$ of $v$ over time. If $d_{t}(v) \geq 1$ and this is not a golden round of type-2, then we have $$d_{t+1}(v) \leq 2 \frac{1}{10} d_v(t)+ \frac{1}{2} \frac{9}{10} d_{v}(t) < \frac{2}{3} d_{t}(v).$$ There are $g_2$ golden rounds of type-2. Except for these, whenever $d_{t}(v)\geq 1$, the effective-degree $d_{t}(v)$ shrinks by at least a $2/3$ factor. In those exceptions, it increases by at most a $2$ factor. Each of these exception rounds cancels the effect of at most $2$ shrinkage rounds, as $(2/3)^2 \times 2 <1$. Thus, ignoring the total of at most $3g_2$ rounds lost due to type-2 golden rounds and their cancellation effects, every other round with $d_{t}(v)\geq 2$ pushes the effective-degree down by a $2/3$ factor\footnote{Notice the switch to $d_{t}(v)\geq 2$, instead of $d_{t}(v)> 1$. We need to allow a small slack here, as done by switching to threshold $d_{t}(v)\geq 2$, in order to avoid the possible zigzag behaviors on the boundary. This is because, the above argument does not bound the number of $2$-factor increases in $d_{t}(v)$ that start when $d_{t}(v)\in (1/2, 1)$ but these would lead $d_{t}(v)$ to go above $1$. This can continue to happen even for an unlimited time  if $d_{t}(v)$ keeps zigzagging around $1$ (unless we give further arguments of the same flavor showing that this is not possible). However, for $d_{t}(v)$ to go/stay above $2$, it takes increases that start when $d_{t}(v)>1$, and the number of these is upper bounded to $g_2$.}. This cannot (continue to) happen more than $\log_{3/2} \mathsf{deg}$ often as that would lead the effective degree to exit the $d_{t}(v)\geq 2$ region. Hence, the number of rounds in which $d_{t}(v)\geq 2$ is at most $\log_{3/2} \mathsf{deg} + 3g_2$. That is, $h \leq \log_{3/2} \mathsf{deg} + 3g_2$. Since $h\geq 400 (\log \mathsf{deg} + \log 1/\eps)$, we get $g_2 > 100 (\log \mathsf{deg} + \log 1/\eps)$.
\end{proof}

\begin{lemma} In each type-1 golden round, with probability at least $1/100$, $v$ joins the MIS. Moreover, in each type-2 golden round, with probability at least $1/100$, a neighbor of $v$ joins the MIS.  Hence, the probability that $v$ has not been removed (due to joining or having a neighbor in MIS) during the first $\beta(\log \mathsf{deg} + \log 1/\eps)$ rounds is at most $\eps$. These statements hold even if the coin tosses outside $N^+_{2}(v)$ are determined adversarially.
\end{lemma}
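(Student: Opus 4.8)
The plan is to establish the two per-round statements and then feed them into \Cref{lem:goldCount}. \emph{Type-$1$ rounds.} Fix a golden round $t$ of type $1$, so $p_t(v)=1/2$ and $d_t(v)<2$. The marks of distinct nodes are mutually independent, so $v$ joins the MIS with probability exactly $p_t(v)\prod_{u\in N(v)}(1-p_t(u))$, the product being over $v$'s surviving neighbours. Applying the elementary bound $1-x\ge 4^{-x}$ (valid for $x\in[0,1/2]$, since $4^{-x}$ is convex and agrees with $1-x$ at $0$ and $1/2$) to each factor gives $\prod_{u\in N(v)}(1-p_t(u))\ge 4^{-\sum_u p_t(u)}=4^{-d_t(v)}>4^{-2}=1/16$, so $v$ joins with probability exceeding $1/32\ge 1/100$. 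All marks used belong to $v$ and $N(v)$, i.e.\ to nodes in $N^{+}(v)\subseteq N^{+}_{2}(v)$.

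\emph{Type-$2$ rounds.} Fix a golden round $t$ of type $2$ and let $L\subseteq N(v)$ be the surviving low-degree neighbours of $v$; then $\sum_{u\in L}p_t(u)\ge d_t(v)/10\ge 1/10$ and every $u\in L$ has $d_t(u)<2$. The obstacle here is that the events ``$u$ joins the MIS'' for $u\in L$ are correlated, and a plain union bound or second-moment estimate degrades badly when $\sum_{u\in L}p_t(u)$ is large (which can happen, since a type-$2$ round only requires $d_t(v)\ge1$). To get around this, order $L=\{u_1,\dots,u_k\}$ and let $E_i$ be the event that $u_i$ is marked while $u_1,\dots,u_{i-1}$ are not. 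The $E_i$ are pairwise disjoint, $\bigcup_i E_i$ is exactly the event that some node of $L$ is marked, and $\sum_i P(E_i)=1-\prod_{u\in L}(1-p_t(u))\ge 1-e^{-1/10}$. Conditioned on $E_i$, node $u_i$ joins the MIS iff none of its neighbours outside $\{u_1,\dots,u_{i-1}\}$ is marked, an event of conditional probability at least $\prod_{w\in N(u_i)}(1-p_t(w))\ge 4^{-d_t(u_i)}>1/16$. Summing the disjoint events $E_i\cap\{u_i\text{ joins}\}$ shows that some node of $L$---hence a neighbour of $v$---joins the MIS with probability at least $\tfrac1{16}\bigl(1-e^{-1/10}\bigr)$, an absolute positive constant, which we write below as $1/100$ to match the statement (its precise value only affects the constant $100$ in \Cref{lem:goldCount} and $\beta$). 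Every mark used lies in $\bigcup_{u\in L}N^{+}(u)\subseteq N^{+}_{2}(v)$.

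\emph{Combining.} Fix the coins the adversary controls (those outside $N^{+}_{2}(v)$), so that the only remaining randomness is the marks inside $N^{+}_{2}(v)$; let $\mathcal F_{t-1}$ be generated by the marks of rounds $1,\dots,t-1$. Whether round $t$ is golden (of either type) and whether $v$ is still alive at its start are determined by the deterministic desire-level dynamics together with the removals so far, hence are $\mathcal F_{t-1}$-measurable; on the other hand ``$v$ is removed in round $t$'' is implied, by the two analyses above, by an event that depends only on the fresh round-$t$ marks inside $N^{+}_{2}(v)$ and that has, on $\{\text{round }t\text{ golden},\ v\text{ alive}\}$, conditional probability at least $1/100$ given $\mathcal F_{t-1}$. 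By \Cref{lem:goldCount}, on the event that $v$ has not been removed during the first $\beta(\log\mathsf{deg}+\log1/\eps)$ rounds there are at least $100(\log\mathsf{deg}+\log1/\eps)$ such golden rounds; a standard sequential (stochastic-domination) argument, coupling each removal indicator with an independent $1/100$-coin, then bounds the probability of surviving all of them by $(1-1/100)^{100(\log\mathsf{deg}+\log1/\eps)}\le e^{-(\log\mathsf{deg}+\log1/\eps)}\le\eps$. Since the adversary's coins were arbitrary, this is the claim.

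\emph{Where the work is.} The crux is the type-$2$ estimate: obtaining a constant lower bound on the chance that \emph{some} low-degree neighbour joins the MIS without paying for the (possibly large) number of such neighbours, for which the ``first marked neighbour'' ordering is the key device. The combining step is otherwise routine, but it hinges on the structural fact---built into the algorithm---that a round's golden-ness is a function of the past while that round's kill event is a function of fresh randomness confined to $N^{+}_{2}(v)$, which is exactly what lets the per-round constant kills be multiplied while preserving the $2$-hop locality.
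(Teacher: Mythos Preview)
Your proof is correct and follows essentially the same route as the paper: the type-1 bound via $1-x\ge 4^{-x}$, the type-2 bound via the ``first marked low-degree neighbour'' device, and then multiplying the per-round survival probabilities over the golden rounds guaranteed by \Cref{lem:goldCount}. Your treatment is in fact a bit more careful than the paper's---you make the filtration/stochastic-domination step explicit and you flag that $\tfrac{1}{16}(1-e^{-1/10})$ is not literally $1/100$ (the paper's arithmetic $0.08/16=1/100$ is off by a factor of $2$)---but these are cosmetic differences, not a different approach.
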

\begin{proof}
In each type-1 golden round, node $v$ gets marked with probability $1/2$. The probability that no neighbor of $v$ is marked is $\prod_{u \in N(v)} (1-p_t(u)) \geq 4^{-\sum_{u \in N(v)} p_t(v)} = 4^{-d_{t}(v)} > 4^{-2}=1/16$. Hence, $v$ joins the MIS with probability at least $1/32>1/100$.

Now consider a type-2 golden round. Suppose we walk over the set $L$ of low-degree neighbors of $v$ one by one and expose their randomness until we reach a node that is marked. We will find a marked node with probability at least $$1-\prod_{u \in \textit{L}} (1-p_{u}(t)) \geq 1- e^{-\sum_{u \in \textit{L}} p_{u}(t)} \geq 1-e^{-d_{t}(v)/10} \geq 1-e^{-1/10} >0.08.$$ When we reach the first low-degree  neighbor $u$ that is marked, the probability that no neighbor of $u$ gets marked is at least $\prod_{w\in N(u)} (1-p_{t}(w)) \geq 4^{-\sum_{w \in N(u)} p_t(w)} \geq 4^{-d_{t}(u)} > 1/16$. Hence, with probability at least $0.08/16=1/100$, one of the neighbors of $v$ joins the MIS. 

We now know that in each golden round, $v$ gets removed with probability at least $1/100$, due to joining MIS or having a neighbor join the MIS. Thus, using \Cref{lem:goldCount}, we get that the probability that $v$ does not get removed is at most $(1-1/100)^{100(\log \mathsf{deg} + \log 1/\eps)} \leq \eps/\mathsf{deg} \leq \eps$.
\end{proof}

\section{Improved Global Complexity}
\label{sec:global}
In this section, we explain how combining the algorithm of the previous section with some known techniques leads to a randomized MIS algorithm with a high probability global complexity of $O(\log \Delta) + 2^{O(\sqrt{\log \log n})}$ rounds. 

As explained in \Cref{subsec:global}, the starting point is to run the algorithm of the previous section for $\Theta(\log \Delta)$ rounds. Thanks to the local complexity of this base algorithm, as we will show, we reach the \emph{shattering threshold} after $O(\log \Delta)$ rounds. The $2$-hops \emph{randomness locality} of \Cref{thm:local-restate}, the fact that it only relies on the randomness bits within $2$-hops neighborhood, plays a vital role in establishing this shattering phenomena. The precise statement of the shattering property achieved is given in \Cref{lem:shattering}, but we first need to establish a helping lemma:

\begin{lemma}\label{lem:smallComp} Let $c>0$ be an arbitrary constant. For any $5$-independent set of nodes $S$---that is, a set in which the pairwise distances are at least $5$---the probability that all nodes of $S$ remain undecided after $\Theta(c\log \Delta)$ rounds of the MIS algorithm of the previous section is at most $\Delta^{-c|S|}$.
\end{lemma}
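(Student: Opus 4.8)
I would derive \Cref{lem:smallComp} by combining three things: the local‑complexity guarantee of \Cref{thm:local-restate} with its failure probability tuned down to $\Delta^{-c}$; the fact that a $5$‑independent set has pairwise disjoint closed $2$‑neighborhoods; and the $2$‑hop \emph{randomness}‑locality of that theorem, which is what lets the $|S|$ per‑node failure bounds be multiplied.

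\textbf{Tuning parameters and disjointness.} Set $\eps:=\Delta^{-c}$ and run the base algorithm for $T=\Theta(c\log\Delta)$ rounds, chosen so that $T\ge \beta(\log\mathsf{deg}(v)+\log(1/\eps))$ for every node $v$; this is possible since $\mathsf{deg}(v)\le\Delta$, so $\beta(\log\mathsf{deg}(v)+\log(1/\eps))\le\beta(1+c)\log\Delta$. Then \Cref{thm:local-restate} says that each $v\in S$ is still undecided after $T$ rounds with probability at most $\Delta^{-c}$, and — the crucial strengthening — the same bound holds conditioned on any fixing of the coin tosses located outside $N^{+}_{2}(v)$. Since the pairwise distances in $S$ are at least $5>2+2$, the sets $\{N^{+}_{2}(v):v\in S\}$ are pairwise disjoint; in particular, for distinct $u,v\in S$ the (round‑by‑round) coin tosses used by the nodes of $N^{+}_{2}(u)$ and of $N^{+}_{2}(v)$ are disjoint.

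\textbf{Multiplying the per‑node bounds.} Write $A_v$ for the event ``$v$ is undecided after $T$ rounds.'' One would like to chain $\Pr[\bigcap_{v\in S}A_v]=\prod_{v\in S}\Pr[A_v\mid \text{earlier }A_u\text{'s}]\le(\Delta^{-c})^{|S|}$, but $A_v$ is \emph{not} a function of the coins inside $N^{+}_{2}(v)$ alone — through the desire‑level dynamics and cascaded removals it can depend on coins up to $\Theta(\log\Delta)$ hops away — so the events $\{A_v\}_{v\in S}$ are genuinely correlated and one cannot simply treat them as independent. The way around this is to expose the randomness round by round. At round $t$, conditioned on the full history of rounds $1,\dots,t-1$ (which already determines every node's desire level and presence at round $t$, hence which rounds are golden for which node), each $v\in S$ that is still undecided and currently in a golden round gets decided with probability at least $1/100$ using only the round‑$t$ coins of the nodes of $N^{+}_{2}(v)$; by the previous paragraph these coin sets are disjoint across $S$, so these ``decided this round'' events are conditionally independent given the history. \Cref{lem:goldCount}, applied to each $v\in S$ with $\eps=\Delta^{-c}$, guarantees that on $A_v$ the node $v$ went through at least $100(\log\Delta+c\log\Delta)$ golden rounds, all failing to decide it. A standard product‑supermartingale argument over these $|S|$ conditionally‑independent ``golden‑round countdowns'' then gives $\Pr[\bigcap_{v\in S}A_v]\le\prod_{v\in S}(1-1/100)^{100(\log\Delta+c\log\Delta)}\le(\Delta^{-c})^{|S|}=\Delta^{-c|S|}$.

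\textbf{Main obstacle.} The single delicate point is the one just flagged: since whether a node has been decided depends on far‑away randomness, the naive ``reveal $N^{+}_{2}(v)$ for $v\in S$ one block at a time and multiply the $\Delta^{-c}$ factors'' does not literally produce the exponent $|S|$, and one must instead run the round‑by‑round exposure above. This is precisely the place where the $2$‑hop locality of the \emph{analysis} of \Cref{thm:local-restate} — not merely its running‑time bound — is indispensable.
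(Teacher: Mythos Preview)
Your proof is correct and is essentially the paper's argument spelled out in more detail: both invoke the $2$-hop adversarial guarantee of \Cref{thm:local-restate} with $\eps=\Delta^{-c}$, use $5$-independence to obtain pairwise disjoint $N_2^+(v)$'s, and then multiply the per-node failure bounds. The paper compresses this into three sentences (``walk over the nodes of $S$ one by one \dots\ the coin tosses we rely on for different nodes of $S$ are non-overlapping''); your caution that the events $A_v$ are not measurable in the coins of $N_2^+(v)$ alone is well-placed, and the round-by-round exposure with conditionally independent golden-round successes that you describe is exactly the right way to read that sentence.
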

\begin{proof} We walk over the nodes of $S$ one by one: when considering node $v\in S$, we know from that \Cref{thm:local-restate} that the probability that $v$ stays undecided after $\Theta(c\log \Delta)$ rounds is at most $\Delta^{-c}$, and more importantly, this only relies on the coin tosses within distance $2$ of $v$. Because of the $5$-independence of set $S$, the coin tosses we rely on for different nodes of $S$ are non-overlapping and hence, the probability that the whole set $S$ stays undecided is at most $\Delta^{-c|S|}$.
\end{proof}

From this lemma, we can get the following \emph{shattering} guarantee. Since the proof is similar to that of \cite[Lemma 3.3]{barenboim2012locality}, or those of \cite[Main Lemma]{beck1991LLL}, \cite[Lemma 4.6]{alon2012LCA}, and \cite[Theorem 3]{levi2015local}, we only provide a brief sketch:
\begin{lemma}
\label{lem:shattering} Let $c$ be a large enough constant and $B$ be the set of nodes remaining undecided after $\Theta(c\log \Delta)$ rounds of the MIS algorithm of the previous section on a graph $G$. Then, with probability at least $1-1/n^{c}$, we have the following two properties:
\begin{enumerate}
\item[(P1)] There is no $(G^{4^-})$-independent $(G^{9^-})$-connected subset $S \subseteq B$ s.t. $|S|\geq \log_{\Delta} n$. Here $G^{x^-}$ denotes the graph where we put edges between each two nodes with $G$-distance at most $x$.

\item[(P2)] All connected components of $G[B]$, that is the subgraph of $G$ induced by nodes in $B$, have each at most $O(\log_{\Delta} n \cdot \Delta^4)$ nodes.
\end{enumerate}
\end{lemma}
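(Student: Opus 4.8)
The plan is to follow the standard graph-shattering argument (as in Beck~\cite{beck1991LLL}, Alon et al.~\cite{alon2012LCA}, Barenboim et al.~\cite{barenboim2012locality}), instantiating the ``each node stays bad with probability $\le \Delta^{-c}$ and this depends only on local coins'' input with Lemma~\ref{lem:smallComp}. For (P1), first I would set up a union bound over all candidate witness sets. A set $S$ that is independent in $G^{4^-}$ and connected in $G^{9^-}$ with $|S| = t := \lceil \log_\Delta n\rceil$ can be specified by choosing a root node ($n$ ways) and then a connected subtree in $G^{9^-}$ spanning the remaining $t-1$ nodes; since $G^{9^-}$ has maximum degree at most $\Delta^9$, the number of such subtrees rooted at a given node is at most $(4\Delta^9)^{t-1}$ by the standard bound on the number of rooted subtrees of a bounded-degree graph (Borel–Galton–Watson / Knuth's enumeration). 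Crucially, because $S$ is $5$-independent in $G$ (distance $\ge 5$, which follows from being independent in $G^{4^-}$), Lemma~\ref{lem:smallComp} applies: $\Pr[S \subseteq B] \le \Delta^{-ct}$. Combining, the expected number of such bad witness sets is at most $n \cdot (4\Delta^9)^{t} \cdot \Delta^{-ct} \le n \cdot \Delta^{(9 + o(1) - c)t}$; since $\Delta^t \ge n$ (by choice of $t$) and $c$ is a large enough constant (say $c \ge 20$), this is at most $n \cdot n^{-(c-10)} \le n^{-c'}$ for a constant $c'$ that grows with $c$. So with probability $\ge 1 - n^{-c'}$ no such $S$ exists, which is (P1); relabeling the constant gives the stated $1 - 1/n^c$.

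For (P2), I would argue that (P1) forces small components by a maximal-packing argument. Suppose some connected component $C$ of $G[B]$ had more than $K := (2\log_\Delta n)\cdot \Delta^4$ nodes. Greedily pick a maximal subset $S \subseteq C$ that is $5$-independent in $G$ (i.e. independent in $G^{4^-}$): repeatedly take any remaining node and delete all nodes within $G$-distance $4$ of it. Each deletion step removes at most $\Delta + \Delta^2 + \Delta^3 + \Delta^4 \le 2\Delta^4$ nodes, so $|S| \ge |C| / (2\Delta^4) > \log_\Delta n$. Moreover $S$ is connected in $G^{9^-}$: since $C$ is connected in $G$ and every node of $C$ is within $G$-distance $4$ of some node of $S$ (maximality), contracting along paths inside $C$ shows any two elements of $S$ are joined by a sequence of $S$-nodes at pairwise $G$-distance at most $9$. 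Thus $S$ is a $(G^{4^-})$-independent, $(G^{9^-})$-connected subset of $B$ with $|S| \ge \log_\Delta n$, contradicting (P1). Hence on the same $1 - 1/n^c$ event, every component of $G[B]$ has at most $O(\log_\Delta n \cdot \Delta^4)$ nodes, giving (P2).

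The main obstacle I anticipate is the bookkeeping in the enumeration/union bound for (P1): one must be careful that the counting is done in $G^{9^-}$ (so the branching factor is $\Delta^9$, not $\Delta$) while the probability estimate from Lemma~\ref{lem:smallComp} requires $5$-independence in $G$ (which is why the witness sets are taken independent in $G^{4^-}$ rather than just in $G$) — the interplay of these two radii is exactly what makes the constants work out, and getting $c$ large enough that $(\text{branching})^t \cdot \Delta^{-ct}$ beats the factor $n$ is the crux. The component-size deduction in (P2) is then routine. Since all of this is essentially identical to the cited lemmas, in the paper I would only sketch it, pointing to the enumeration bound and the packing argument and emphasizing that the one new ingredient — the $2$-hop randomness locality needed to invoke Lemma~\ref{lem:smallComp} — is supplied by Theorem~\ref{thm:local-restate}.
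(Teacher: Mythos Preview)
Your proposal is correct and follows essentially the same approach as the paper's proof sketch: a union bound over connected subtrees in $G^{9^-}$ combined with \Cref{lem:smallComp} for (P1), and a greedy $4$-hop packing argument to derive (P2) from (P1). If anything, you are more careful than the paper's sketch about the branching factor in the tree enumeration (correctly using $\Delta^9$ for $G^{9^-}$, whereas the paper writes $\Delta$ and lets the large constant $c$ absorb the slack) and about justifying the $G^{9^-}$-connectedness of the greedily constructed set in (P2).
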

\begin{proof}[Proof Sketch] Let $H=G^{9^-} \setminus G^{4^-}$, i.e., the result of removing $G^{4-}$ edges from $G^{9-}$. For (P1), note that the existence of any such set $S$ would mean $H[B]$ contains a $(\log_{\Delta} n)$-node tree subgraph. There are at most $4^{\log_{\Delta} n}$ different $(\log_{\Delta} n)$-node tree topologies and for each of them, less than $n \Delta^{\log_{\Delta} n}$ ways to embed it in $H$. For each of these trees, by \Cref{lem:smallComp}, the probability that all of its nodes stay is at most $\Delta^{-c(\log_{\Delta} n)}$. By a union bound over all trees, we conclude that with probability $1-n (4\Delta)^{ \log_{\Delta} n} \Delta^{-c(\log_{\Delta} n)} \geq 1-1/n^{c}$, no such such set $S$ exists. For (P2), note that if $G[B]$ has a component with more than $\Theta(\log_{\Delta} n \cdot \Delta^4)$ nodes, then we can find a set $S$ violating (P1): greedily add nodes to the candidate $S$ one-by-one, and each time discard all nodes within $4$-hops of the newly added node, which are at most $O(\Delta^4)$ many.
\end{proof}

From property (P2) of \Cref{lem:shattering}, it follows that running the deterministic MIS algorithm of Panconesi and Srinivasan\cite{panconesi1992improved}, which works in $2^{O(\log n')}$ rounds in graphs of size $n'$, in each of the remaining components finishes our MIS problem in $2^{O(\sqrt{\log \Delta + \log \log n}})$ rounds. However, the appearance of the $\log \Delta$ in the exponent is undesirable, as we seek a complexity of $O(\log \Delta) + 2^{O(\sqrt{\log \log n}})$. To remedy this problem, we use an idea similar to \cite[Section 3.2]{barenboim2012locality}, which tries to leverage the (P1) property. 

In a very rough sense, the (P1) property of \Cref{lem:shattering} tells us that if we ``contract nodes that are closer than 5-hops" (this is to be made precise), the left over components would have size at most $\log_{\Delta} n$, which would thus avoid the undesirable $\log \Delta$ term in the exponent. We will see that, while running the deterministic MIS algorithm, will be able to expand back these contractions and solve their local problems. We next formalize this intuition. 

The \emph{finish-off} algorithm is as follows: We consider each connected component $C$ of the remaining nodes separately; the algorithm runs in parallel for all the components. First compute a $(5, h)$-ruling set $R_{C}$ in each connected component $C$ of the set $B$ of the remaining nodes, for an $h=\Theta(\log \log n)$. Recall that a $(5, h)$-ruling set $R_{C}$ means each two nodes of $R_{C}$ have distance at least $5$ while for each node in $C$, there is at least one node in $R_C$ within its $h$-hops. This $(5, h)$-ruling set $R_C$ can be computed in $O(\log \log n)$ rounds using the algorithm\footnote{This is different than what Barenboim et al. did. They could afford to use the more standard ruling set algorithm, particularly computing a $(5, 32\log \Delta+O(1))$-ruling set for their purposes, because the fact that this $32 \log \Delta$ ends up multiplying the complexity of their finish-off phase did not change (the asymptotics of) their overall complexity.} of Schneider, Elkin and Wattenhofer\cite{schneider2013symmetry}. See also\cite[Table 4]{barenboim2012locality}. Form clusters around $R_{C}$-nodes by letting each node $v\in C$ join the cluster of the nearest $R_{C}$-node, breaking ties arbitrarily by IDs. Then, contract each cluster to a new node. Thus, we get a new graph $G'_C$ on these new nodes, where in reality, each of these new nodes has radius $h=O(\log \log n)$ and thus, a communication round on $G'_C$ can be simulated by $O(h)$ communication rounds on $G$. 

From (P1) of \Cref{lem:smallComp}, we can infer that $G'_C$ has at most $\log_\Delta n$ nodes, w.h.p., as follows: even though $R_C$ might be disconnected in $G^{9-}$, by greedily adding more nodes of $C$ to it, one by one, we can make it connected in $G^{9-}$ but still keep it $5$-independent. We note that this is done only for the analysis. See also \cite[Page 19, Steps 3 and 4]{barenboim2012locality} for a more precise description. Since by (P1) of \Cref{lem:smallComp}, the end result should have size at most $\log_\Delta n$, with high probability, we conclude $G'_C$ has at most $\log_\Delta n$ nodes, with high probability.

We can now compute an MIS of $C$, via almost the standard deterministic way of using network decompositions. We run the network decomposition algorithm of Panconesi and Srinivasan\cite{panconesi1992improved} on $G'_C$. This takes $2^{O(\sqrt{\log \log_\Delta n})} $ rounds and gives $G'_C$-clusters of radius at most $2^{O(\sqrt{\log \log_\Delta n})} $, colored with $2^{O(\sqrt{\log \log_\Delta n})}$ colors such that adjacent clusters do not have the same color. We will walk over the colors one by one and compute the MIS of the clusters of that color, given the solutions of the previous colors. Each time, we can (mentally) expand each of these $G'_C$ clusters to all the $C$-nodes of the related cluster, which means these $C$-clusters have radius at most $\log \log n \cdot 2^{O(\sqrt{\log \log_\Delta n})}$. While solving the problem of color-$j$ clusters, we make a node in each of these clusters gather the whole topology of its cluster and also the adjacent MIS nodes of the previous colors. Then, this cluster-center solves the MIS problem locally, and reports it back. Since each cluster has radius $\log \log n \cdot 2^{O(\sqrt{\log \log_\Delta n})}$, this takes $\log \log n \cdot 2^{O(\sqrt{\log \log_\Delta n})}$ rounds per color. Thus, over all the colors, the complexity becomes $2^{O(\sqrt{\log\log_{\Delta} n})} \cdot \log \log n \cdot 2^{O(\sqrt{\log \log_\Delta n})} = 2^{O(\sqrt{\log \log n})}$ rounds. Including the $O(\log \log n)$ ruling-set computation rounds and the $O(\log \Delta)$ pre-shattering rounds, this gives the promised global complexity of $O(\log \Delta) + 2^{O(\sqrt{\log \log n})}$, hence proving \Cref{thm:global}.  


\section{Concluding Remarks}
\vspace{-8pt}
This paper presented an extremely simple randomized distributed MIS algorithm, which exhibits many interesting \emph{local} characteristics, including a \emph{local complexity} guarantee of each node $v$ terminating in $O(\log \mathsf{deg}(v)+\log 1/\eps)$ rounds, with probability at least $1-\eps$. We also showed that combined with known techniques, this leads to an improved high probability global complexity of $O(\log \Delta) + 2^{O(\sqrt{\log \log n})}$ rounds, and several other important implications, as described in \Cref{subsec:implications}.

For open questions, the gap between the upper and lower bounds, which shows up when $\log \Delta = \omega (\sqrt{\log n})$, is perhaps the most interesting. We saw in (C2) of \Cref{subsec:implications} that if the lower-bound is the one that should be improved, we need to go away from ``tree local-views" topologies. Another longstanding open problem is to find a $\poly(\log n)$ deterministic distributed MIS algorithm. Combined with the results of this paper, that can potentially get us to an $O(\log \Delta) + \poly(\log \log n)$ randomized algorithm.  
\medskip

\paragraph{Acknowledgment}
I thank Eli Gafni, Bernhard Haeupler, Stephan Holzer, Fabian Kuhn, Nancy Lynch, and Seth Pettie for valuable discussions. I am also grateful to Fabian Kuhn and Nancy Lynch for carefully reading the paper and  many helpful comments. The point (C2) in \Cref{subsec:implications} was brought to my attention by Fabian Kuhn. The idea of highlighting the \emph{local complexity} is rooted in conversations with Stephan Holzer and Nancy Lynch, and also in Eli Gafni's serious insistence\footnote{This is a paraphrased version of his comment during a lecture on Linial's $\Omega(\log^* n)$ lower bound, in the Fall 2014 Distributed Graph Algorithms (DGA) course at MIT.} that \emph{the (true) complexity of a local problem should not depend on $n$}. 
 
\newpage
\bibliographystyle{alpha}
\bibliography{ref}

\appendix
\section{Simplified Global Analysis of Luby's, due to Yves et al.}
\label{Yves}
We here explain (a slightly paraphrased version of) the clever approach of Yves et al.\cite{YvesMIS} for bounding Luby's global time complexity:

\begin{lemma} \label{lem:Main} Let $G[V_t]$ be the graph induced by nodes that are alive in round $t$, and let $m_t$ denote the number of edges of $G[V_t]$. For each round $t$, we have $\mathbb{E}[m_{t+1}] \leq \frac{m_{t}}{2}$, where the expectation is on the randomness of round $t$. 
\end{lemma}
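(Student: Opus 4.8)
The plan is to charge, for each edge that disappears in round $t$, its removal to a specific directed edge event, and show that in expectation each edge contributes at most $1/2$ to $\mathbb{E}[m_{t+1}]$ — equivalently, to argue directly that $\mathbb{E}[m_t - m_{t+1}] \geq m_t/2$. The cleanest route is to orient each edge and bound, for each ordered pair, the probability that the edge survives; but the slicker version (the one Yves et al. use) works at the level of removed \emph{nodes} paired with witnessing neighbors. Concretely, an edge $\{x,y\}$ of $G[V_t]$ is destroyed in round $t$ as long as \emph{some} endpoint is removed, and an endpoint is removed iff it is a local minimum or has a local-minimum neighbor. So I would set up, for each edge $e = \{x,y\}$ and each endpoint, say $x$, the event that $x$ has a neighbor $z$ (possibly $z = y$, possibly not) that is a strict local minimum of the random values in $N^+(z)$; on that event $x$ dies, hence $e$ dies.

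First I would fix round $t$ and condition on the graph $G[V_t]$; all randomness is the i.i.d. uniform values $r(u)$, $u \in V_t$. Second, for a directed edge $(u,v)$ with $u,v$ alive, I would define $A_{(u,v)}$ to be the event ``$v$ is a strict local minimum in $N^+(v)$ and $r(u)$ is the second-smallest value in $N^+(v)$'' — or some similar event pinning down a single neighbor whose removal kills $u$ and whose probability is easy to compute because it only involves the ordering of values inside $N^+(v)$. The key combinatorial point is that these events, ranging over directed edges, are \emph{disjoint}: each corresponds to a distinct (local minimum, its lightest neighbor) ordered pair, and the lightest-neighbor relation picks out at most one partner. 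Third, I would lower-bound the number of alive edges killed by charging: if $v$ is a strict local minimum, then $v$ and all of its alive neighbors are removed, so \emph{all} edges incident to $N^+(v)$ vanish; summing $d(v) + $ (edges among neighbors) over local minima, with the disjointness ensuring no double counting of the main contribution, gives $m_t - m_{t+1} \geq \sum_{\text{directed } (u,v)} \mathbbm{1}[A_{(u,v)}]$ up to the bookkeeping. Taking expectations, $\mathbb{E}[A_{(u,v)}] \geq$ (something like $\frac{1}{d(v)(d(v)+1)} \cdot \frac{d(v)}{?}$), and summing over the two orientations of each edge telescopes to $\sum_e 1 \cdot \frac{1}{2}$ after using that $\sum_{u \sim v} \frac{1}{d(v)+1} \cdot (\text{stuff})$ collapses.

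The main obstacle — and the place where the argument is genuinely clever rather than routine — is the disjointness/no-double-counting bookkeeping: one must choose the witnessing events $A_{(u,v)}$ so that (i) their occurrence really does force the edge's removal, (ii) they are pairwise disjoint across all directed edges (so that $\mathbb{E}[m_t - m_{t+1}]$ is at least the \emph{sum} of their probabilities, not just a union bound), and (iii) each probability is exactly computable from the relative ranking of $r$-values in one inclusive neighborhood, so that the final sum evaluates to $m_t/2$ cleanly. Getting all three simultaneously is the crux; once the right events are in hand, the expectation computation is a short calculation using only that for $k$ i.i.d.\ continuous variables every ordering is equally likely. I would therefore spend the bulk of the writeup carefully defining the charging scheme and verifying disjointness, then dispatch the arithmetic in a couple of lines and finish with Markov's inequality to recover the $O(\log n)$ global bound (though that last step is outside this lemma).
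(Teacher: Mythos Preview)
Your high-level plan is right --- define witnessing events whose disjointness lets you add probabilities, then check the sum is $m_t/2$ --- but the concrete event you propose does not deliver the bound. With $A_{(u,v)}=\{v$ is the strict minimum in $N^+(v)$ and $u$ is second$\}$ you get $\Pr[A_{(u,v)}]=\frac{1}{(d(v)+1)\,d(v)}$, and summing over all directed edges gives $\sum_v \frac{1}{d(v)+1}$, which on a $d$-regular graph is $n/(d+1)$ rather than $m_t/2=nd/4$. The problem is that your event witnesses only one dying edge per local minimum, whereas a local minimum kills every edge touching its whole closed neighborhood; you are undercounting by roughly a factor of $d^2$. Trying instead to credit each local minimum with all the edges it removes runs into the opposite problem: the edge sets removed by different local minima overlap, so you cannot simply add.

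The paper's (Yves et al.'s) charging is indexed by a \emph{triple}: an edge $e=\{u,v\}$, a side (say $v$), and a witness $w\in\Gamma(v)$. The event ``$w\stackrel{v}{\rightarrow}e$'' is that $w$ has the strict minimum value in $\Gamma(w)\cup\Gamma(v)$; this has probability at least $\frac{1}{d(w)+d(v)}$ and forces $e$ to die (since $w$ joins the MIS and $v$ is its neighbor). For fixed $e$ and fixed side $v$, the events over $w\in\Gamma(v)$ are pairwise disjoint --- two neighbors of $v$ cannot both be the minimum of a set containing both --- so their probabilities add. The two sides of $e$ may both fire, costing only a factor of $2$. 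The resulting sum is
\[
\frac{1}{2}\sum_{e=\{u,v\}}\Bigl(\sum_{w\in\Gamma(v)}\frac{1}{d(w)+d(v)}+\sum_{w'\in\Gamma(u)}\frac{1}{d(w')+d(u)}\Bigr)
=\frac{1}{2}\sum_{v}\sum_{w\in\Gamma(v)}\frac{d(v)}{d(w)+d(v)},
\]
and the symmetry $\frac{d(v)}{d(v)+d(w)}+\frac{d(w)}{d(v)+d(w)}=1$ collapses it to $m_t/2$. The idea you were missing is that the witness $w$ must range over \emph{all} of $\Gamma(v)$, not be pinned to the edge's other endpoint or to a single ``lightest neighbor''; that is what produces the $\frac{1}{d(w)+d(v)}$ terms whose symmetric pairing makes the arithmetic close.
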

\begin{proof}
Consider an edge $e=(u, v)$ that is alive at the start of a round $t$, i.e., that is in $G[V_t]$. Note that edge $e$ will not be in $G[V_{t+1}]$, in which case we say $e$ died, if in the random numbers drawn in round $t$, there is a node $w$ that is adjacent to $v$ or $u$ (or both) and $w$ has the strict local minima of its own neighborhood. In this case, we say node $w$ \emph{killed} edge $e$. 

Note that the probability that $w$ kills $e$ is $\frac{1}{d(w)+1}$, where $d(w)$ denotes the degree of $w$. The difficulty comes when we want to compute the probability that \emph{there exists} a $w$ that kills $e$. This is mainly because, the events of different $w$ killing $e$ are not disjoint, and hence we cannot easily sum over them. Fortunately, there is a simple and elegant change in the definition, due to Yves et al.\cite{YvesMIS}, which saves us from tedious calculations:

Without loss of generality, suppose that $w$ is adjacent to $v$. We say $w$ \emph{strongly kills} $e$ from the side of node $v$---and use notation $w\stackrel{v}{\rightarrow}e$ to denote it---if $w$ has the (strictly) minimum random number in $\Gamma_w\cup \Gamma_v$. Note that this is a stronger requirement and thanks to this definition, at most one node $w$ can strongly kill $e$ from the side of $v$. Thus, in a sense, we now have events that are disjoint which means the probability that any of them happens is the summation of the probabilities of each of them happening. The only catch is, we might \emph{double count} an edge dying, because it gets (strongly) killed from both endpoints, but that is easy to handle; we just lose a $2$-factor. In the following, with a slight abuse of notation, by $E$ we mean the alive edges, i.e., those of $G[V_t]$, and by $\Gamma(v)$, we mean the neighbors of $v$ in $G[V_t]$. We have

\begin{align*}
\mathbb{E}[\textit{Number of edges that die}] &\geq& \sum_{e=(v, w)\in E} \Pr[e \textit{ gets strongly killed}]\\ 
&\geq&  \sum_{e=(v, w)\in E} \bigg(\sum_{w\in \Gamma(v)}\Pr[w\stackrel{v}{\rightarrow}e] + \sum_{w'\in \Gamma(u)}\Pr[w'\stackrel{u}{\rightarrow}e]\bigg)/2 \\ 
&\geq& \sum_{e=(v, w)\in E} \bigg(\sum_{w\in \Gamma(v)}\frac{1}{d(w)+d(v)} + \sum_{w'\in \Gamma(u)}\frac{1}{d(w)+d(u)}\bigg)/2 \\
&=& \sum_{v\in V} \sum_{u \in \Gamma(v), \; e=(u, v)} \bigg(\sum_{w\in \Gamma(v)}\frac{1}{d(w)+d(v)} + \sum_{w'\in \Gamma(u)}\frac{1}{d(w)+d(u)}\bigg)/2 \\ 
&=& \bigg(\sum_{v\in V} \sum_{w\in \Gamma(v)} \sum_{u \in \Gamma(v), \; e=(u, v)} \frac{1}{d(w)+d(v)} \\
&+& \sum_{u\in V} \sum_{w'\in \Gamma(u)} \sum_{v \in \Gamma(u), \; e=(u, v)} \frac{1}{d(w')+d(u)}\bigg)/2\\
&=& \bigg(\sum_{v\in V} \sum_{w\in \Gamma(v)} \frac{d(v)}{d(w)+d(v)}+\sum_{u\in V} \sum_{w'\in \Gamma(u)} \frac{d(u)}{d(w')+d(u)} \bigg)/2\\
&=& \bigg(\sum_{v\in V} \sum_{w\in \Gamma(v)} \frac{d(v)}{d(w)+d(v)}+\sum_{v\in V} \sum_{w\in \Gamma(v)} \frac{d(w)}{d(w)+d(v)} \bigg)/2 \\
&=& \bigg(\sum_{v\in V} \sum_{w\in \Gamma(v)} 1\bigg)/2 = m_{t}/2.
\end{align*}
\end{proof}
 
It follows from Lemma \ref{lem:Main} that the expected number of the edges that are alive after $4\log n$ rounds is at most $\frac{n^2 /2}{2^{4\log n}} < \frac{1}{n^2}$. Therefore, using Markov's inequality, we conclude that the probability that there is at least $1$ edge that is left alive is at most $\frac{1}{n^2}$. Hence, with probability at least $1-\frac{1}{n^2}$, all nodes have terminated by the end of round $4\log n$.
\end{document}